\renewcommand{\a}{\alpha}
\renewcommand{\b}{\beta}
\newcommand{\g}{\gamma}
\renewcommand{\d}{\delta}
\newcommand{\e}{\varepsilon}
\newcommand{\f}{\varphi}
\newcommand{\s}{\sigma}
\newcommand{\Si}{\Sigma}
\renewcommand{\k}{\kappa}
\renewcommand{\l}{\lambda}
\newcommand{\cO}{{\mathcal O}}
\newcommand{\cC}{{\mathcal C}}
\newcommand{\cT}{{\mathcal T}}
\newcommand{\cB}{{\mathcal B}}
\newcommand{\cL}{{\mathcal L}}
\newcommand{\cE}{{\mathcal E}}
\newcommand{\cU}{{\mathcal U}}
\newcommand{\cN}{{\mathcal N}}
\newcommand{\cH}{{\mathcal H}}
\newcommand{\cI}{\mathcal I}
\newcommand{\bR}{\mathbb R}
\newcommand{\bE}{\mathbb E}
\newcommand{\bV}{\mathbb V}
\newcommand{\Diff}{\mathrm{Diff}}
\newcommand{\Ric}{\mathrm{Ric}}
\newcommand{\tT}{\mathsf{T}}
\newcommand{\be}{\begin{equation}}
\newcommand{\ee}{\end{equation}}
\newcommand{\bes}{\begin{equation*}}
\newcommand{\ees}{\end{equation*}}
\newcommand{\tr}{\mathrm{tr}}
\newcommand{\beaa}{\begin{eqnarray*}}
\newcommand{\bea}{\begin{eqnarray}}
\newcommand{\beal}[1]{\begin{eqnarray}\label{#1}}
\newcommand{\bean}{\begin{eqnarray}\nonumber}
\newcommand{\beadl}[1]{\begin{deqarr}\label{#1}}
\newcommand{\eeadl}[1]{\arrlabel{#1}\end{deqarr}}
\newcommand{\eeal}[1]{\label{#1}\end{eqnarray}}
\newcommand{\eead}[1]{\end{deqarr}}
\newcommand{\eea}{\end{eqnarray}}
\newcommand{\eeaa}{\end{eqnarray*}}
\newcommand{\p}{\partial}
\renewcommand{\to}{\rightarrow}
\renewcommand{\phi}{\varphi}
\renewcommand{\epsilon}{\varepsilon}
\renewcommand{\hat}{\widehat}
\newcommand{\<}{\langle}
\renewcommand{\>}{\rangle}
\newcommand{\w}{\widetilde}
\theoremstyle{plain}
\newtheorem{lemma}{Lemma}[section]
\newtheorem{proposition}[lemma]{Proposition}
\newtheorem{theorem}[lemma]{Theorem}
\newtheorem{corollary}[lemma]{Corollary}
\theoremstyle{remark}
\theoremstyle{definition}
\newtheorem{remark}[lemma]{Remark}
\def\blacksquare{\hbox to .60em {\vrule width .60em height .60em}}
\numberwithin{equation}{section}
\begin{document}

\title[ ]{Well-Posed Geometric Boundary data in General Relativity, II: Twisted Dirichlet boundary data} 

\author{Zhongshan An and Michael T. Anderson}

\address{Institute of Geometry and Physics, 
University of Science and Technology of China,
No. 99 Xiupu Road, Shanghai, China, 201305}
\email{zhshan.an@gmail.com}

\address{Department of Mathematics, 
Stony Brook University,
Stony Brook, NY 11794}
\email{michael.anderson@stonybrook.edu}

\begin{abstract} 
In this second work in a series, we prove the local-in-time well-posedness of the IBVP for the vacuum Einstein equations in general 
relativity with twisted Dirichlet boundary conditions on a finite timelike boundary. The boundary conditions consist of specification of the 
pointwise conformal class of the boundary metric, together with a scalar density involving a combination of the volume form of the bulk metric 
restricted to the boundary together with the volume form of the boundary metric itself. 
\end{abstract}

\maketitle 

\section{Introduction}

  This work is the second in a series on the initial boundary value problem (IBVP) in general relativity. The setting is the study of globally hyperbolic vacuum 
solutions of the Einstein equations 
\be \label{vacuum}
\Ric_g = 0,
\ee
on a finite domain $M \simeq I\times S$, where the Cauchy surface $S$ is a compact, $n$-dimensional, ($n \geq 2$), manifold with 
boundary $\p S = \Si$ and $\cC = I\times \Si$ is a time-like boundary $\cC$ of $M$. Here globally hyperbolic is understood in the sense of 
manifolds with time-like boundary. 

  Consider the space $\bE$ of all such vacuum solutions with a fixed topology $M$. Let $\Diff_0(M)$ 
be the group of diffeomorphisms $\f: M \to M$, isotopic to the identity and equal to the identity on $S \cup \cC$. The group $\Diff_0(M)$ acts naturally on 
$\bE$ and the quotient space 
\be \label{cE}
\cE = \bE/{\rm Diff}_0(M)
\ee
is the moduli space of vacuum solutions, closely related to the covariant phase space of vacuum Einstein metrics. For simplicity, all data here and 
below are assumed to be $C^{\infty}$ smooth; cf.~Remark \ref{findiff} for extension to function spaces of finite differentiability. 

\medskip 

  The primary issue here is to find effective descriptions or parametrizations of $\cE$ in terms of initial data on $S$ and boundary data on $\cC$. Let then 
$\cI_0$ denote the space of vacuum initial data, i.e.~pairs $(\g_S, \k)$ consisting of a Riemannian metric $\g_S$ on $S$ and a symmetric bilinear form $\k$ on $S$, 
satisfying the vacuum Einstein constraint equations, (discussed in detail later). In \cite{I}, we considered Dirichlet boundary data, where the space of boundary data is 
the space of globally hyperbolic Lorentz metrics on $\cC$. In this work, we choose the space of boundary data 
\be \label{bspace}
\cB = {\rm Conf}(\cC)\times C_+^{\infty}(\cC),
\ee
to be the space of pointwise conformal classes of globally hyperbolic Lorentz metrics $\g_{\cC}$ on $\cC$ times the space of smooth positive 
scalar functions on $\cC$. The boundary data of $(M, g)$ are now given by the pair 
\be \label{bdata}
([g_{\cC}], \mu_g),
\ee
consisting of the conformal class $[g_{\cC}]$ of the induced boundary metric, together with the scalar 
\be \label{mug}
\mu_g = dv_g (dv_{g_{\cC}})^{-2/n}.
\ee
Here $dv_g$ is the volume form of the ambient metric $g$ on $M$ restricted to $\cC$ and $dv_{g_{\cC}}$ is the volume form of the boundary metric 
$g_{\cC}$ on $\cC$. The product $\mu_g$ is viewed as a scalar density with respect to an arbitrarily chosen background volume form on $M$. 
The particular form of $\mu_g$ in \eqref{mug} is chosen since it is well adapted to the choice of gauge used here - the harmonic or wave coordinate 
gauge. Different choices of gauge may lead to other expressions for the scalar $\mu_g$, but this will not be pursued further here. 

Note that the pair of boundary data $([g_{\cC}], dv_{g_{\cC}})$, when taken together, is exactly Dirichlet boundary data $g_{\cC}$ studied in \cite{I}. While 
Dirichlet boundary data was proved to form a well-posed IBVP in \cite{I} in some regimes it is not well-posed in general; as discussed further brlow, some 
restrictions are necessary. The boundary data \eqref{bdata} can be viewed as a modification of Dirichlet boundary data, involving a twisting by the volume 
form of $g$. Observe that the boundary data \eqref{bdata} involves no derivatives of the metric at $\cC$. 

\medskip 

For $\cB$ as in \eqref{bspace}, there is a natural map to initial and boundary data given by 
\be \label{mainPhi}
\begin{split}
&\Phi: \bE \to \cI_0 \times_c \cB,\\
&\Phi(g) = (g_S, K, [g_{\cC}], \mu_g),\\
\end{split}
\ee
where $g_S$ is the metric on $S$ induced by $g$ and $K$ is the induced second fundamental form (extrinsic curvature) on $S$. 
The subscript $c$ denotes the $C^{\infty}$ compatibility conditions between the initial and boundary data at the corner $\Si$. 

  Due to the diffeomorphism invariance of the vacuum Einstein equations, it is well-known that the map $\Phi$ cannot be one-to-one and so $\Phi$ 
cannot be an effective parametrization of the space of vacuum solutions $\bE$. Let $\bE^H$ be the space of vacuum Einstein metrics in harmonic or 
wave gauge; cf.~\S 2.4 for details. Let $\Diff_1(M) \subset \Diff_0(M)$ be the subgroup of diffeomorphisms $\f$ such that $\f$ is the identity to first order at $S$; 
equivalently, $\f \in \Diff_0(M)$ and 
$$\f_*(\p_t) = \p_t \ \ {\rm on} \ \ S.$$
There are no $1^{\rm st}$ order restrictions for $\f \in \Diff_0(M)$ along $\cC$, except for the smooth compatibility condition at the corner $\Si$. For any 
metric $g$ such diffeomorphisms leave the future time-like normal vector $\nu_S$ to $S$ in $(M, g)$ invariant: $\f_*(\nu_S) = \nu_S$. 
It is well-known (cf.~again \S 2.4)  that for any $g \in \bE$, there is a unique diffeomorphism $\f \in \Diff_1(M)$ such that $$\f^*g \in \bE^H.$$
In other words, there is a natural identification 
\be \label{harmslice}
\cE_* =  \bE/\Diff_1(M) \simeq \bE^H,
\ee
where $\cE_*$ is the marked or framed moduli space, consisting of the moduli space $\cE$ together with the specification of the unit normal $\nu_S$ 
along $S$. The group $\Diff_0(M)$ acts on $\cE_*$ with quotient the moduli space $\cE$,
$$\cE_*/\Diff_0(M) = \cE,$$
and with fibers given by the orbits of the action of the quotient group $\Diff_0(M) / \Diff_1(M)$ on $\cE_*$.\footnote{It is easy to see that $\Diff_1(M)$ is a 
normal subgroup of $\Diff_0(M)$. }

\medskip 

   The main result of this paper is the following (local-in-time) well-posedness result for the IBVP with boundary data given by \eqref{bdata}. 
Let $\bV_S'$ be the space of smooth vector fields along $S$ nowhere tangent to $S$ (the target space for normal vector fields along $S$) and form the 
product space $\cI_0 \times \bV' \times \cB$; let $[\cI_0 \times \bV'_S \times \cB]_c \subset \cI_0 \times \bV' \times \cB$ be the subset consisting of 
data smoothly compatible at the corner $\Si$, cf.~\S 2.4 for details.

 Let $(\bE^H)^* \subset \bE^H$ be the space of vacuum Einstein metrics in harmonic gauge defined (at least) on the domain $[0, t^*)\times S$, 
 where 
 \be \label{tstar}
 t^*: \bE^H \to \bR
 \ee
 is a continuous function of $g$ representing the $g$-proper time between the slices $\{0\} \times S$ and $\{t^*\}\times S$. 
 Define the target data space $[\cI_0 \times \bV'_S \times \cB]_c^*$ in the same way, so that the data on $\cC$ are restricted to $[0, t^*)\times \Si$.

\begin{theorem} \label{mainthm}
There exists a choice of smooth time function $t^*$ such that the spaces $(\bE^H)^*$ and $[\cI_0 \times \bV'_S \times \cB]_c^*$ are smooth tame 
Frechet manifolds and the map 
\be \label{mainPhiH}
\begin{split}
&\Phi^H: (\bE^H)^* \to [\cI_0 \times \bV'_S \times \cB]_c^*,\\
&\Phi^H(g) = (g_S, K, \nu_S, [g_{\cC}], \mu_g),\\
\end{split}
\ee
is a smooth tame diffeomorphism of $(\bE^H)^*$ onto $[\cI_0 \times \bV'_S \times \cB]_c^*$ with smooth, tame inverse. Thus the IBVP with 
boundary data \eqref{bdata} is locally-in-time well-posed in harmonic gauge. 

\end{theorem}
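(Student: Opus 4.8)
The plan is to prove Theorem~\ref{mainthm} by reducing it to the well-posedness of a quasilinear hyperbolic IBVP, verifying that the conformal-volume boundary conditions are admissible for that problem, checking that the wave gauge propagates, and finally assembling everything through the Nash--Moser inverse function theorem for tame Fr\'echet manifolds.

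\emph{Reduction.} As in \cite{I,II}, in harmonic (wave) gauge---normalized also along $\cC$ by boundary gauge conditions adapted to the scalar $\mu_g$ of \eqref{mug}---the equation $\Ric_g=0$ is replaced by the reduced equation $\Ric^H_g=0$, a diagonal quasilinear wave system for the components of $g$; here $\Ric^H_g$ differs from $\Ric_g$ by the symmetrized covariant derivative of the wave-gauge one-form (tension field) $V(g)$. Given compatible data $(g_S,K,\nu_S,[g_\cC],\mu_g)$, the normal $\nu_S$ together with $(g_S,K)$, the harmonic gauge condition and the constraint equations reconstruct the full Cauchy data of the reduced system on $S$ (the zeroth- and first-order jet of $g$ along $S$), while $([g_\cC],\mu_g)$ are prescribed on $\cC$. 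I would first check that $\bE^H$ and $[\cI_0\times\bV'_S\times\cB]_c$ are smooth tame Fr\'echet manifolds---the former from the tame solvability of the reduced IBVP established below, the latter from the local structure of the constraint variety $\cI_0$ together with the linear corner-compatibility equations of \S 2.4---and then invoke Hamilton's Nash--Moser theorem, which reduces the claim to showing that $\Phi^H$ is a bijection whose linearization at each point is invertible with tame inverse.

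\emph{The linearized IBVP (the crux).} Linearizing the reduced operator about $g\in\bE^H$, the operator $L:=D\Ric^H_g$ is, modulo lower-order terms, the tensor wave operator $\half\Box_g$, while the linearized boundary operator $B:=D([g_\cC],\mu_g)$ is a zeroth-order operator on the boundary restriction of $h$ (the data \eqref{bdata} involve no derivatives of $g$). The heart of the proof is that $(L,B)$ is a well-posed IBVP. I would verify this by the Kreiss--Lopatinski / Lopatinski--Shapiro test: freeze the coefficients at a boundary point, pass to the half-space Fourier--Laplace normal-mode problem, and show that the only bounded (decaying) mode satisfying $Bh=0$ is $h=0$---equivalently, that the Lopatinski determinant has no zero in $\{\Real\zeta\ge 0\}$, or, after passing to a first-order symmetric hyperbolic reformulation, that $B$ restricts to a maximally nonnegative subspace for the boundary matrix. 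This is exactly the step that \emph{fails} for pure Dirichlet data in \cite{II}: the point of twisting by $dv_g$ is that $\mu_g$, in a boundary-adapted frame, controls a power of the lapse of $\cC$ in $(M,g)$ rather than the conformal factor of $g_\cC$, and this substitution removes the offending zero of the determinant. Granting the condition, the Kreiss theory gives existence, uniqueness and a priori estimates for $Lh=f$ with $Bh=q$ on $\cC$ and prescribed Cauchy data on $S$; keeping track of the fixed number of derivatives lost yields the \emph{tame} estimates, and propagating the smooth corner-compatibility conditions along $\Si$ yields $C^\infty$ solutions. Iterating the linear theory in the standard way then gives local-in-time existence and uniqueness for the nonlinear reduced IBVP.

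\emph{Gauge propagation and conclusion.} It remains to show that a solution $g$ of $\Ric^H_g=0$ with data $([g_\cC],\mu_g)$ is genuinely vacuum, i.e.~$V(g)\equiv 0$. The contracted Bianchi identity forces $V=V(g)$ to satisfy a homogeneous linear wave equation; its Cauchy data on $S$ vanish because $(g_S,K)$ satisfies the vacuum constraints; and the boundary gauge conditions are arranged---this is the second place the particular form \eqref{mug} is used---so that $V$ satisfies on $\cC$ a homogeneous dissipative boundary condition, the tangential part of $V$ being controlled through the free prescription of $[g_\cC]$ and the transverse part through that of $\mu_g$. A uniqueness estimate for this homogeneous IBVP forces $V\equiv 0$, so $g$ solves $\Ric_g=0$. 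Assembling the pieces, $\Phi^H$ is a bijection onto $[\cI_0\times\bV'_S\times\cB]_c$ whose linearization is everywhere invertible with tame inverse, so Hamilton's inverse function theorem upgrades it to a smooth tame diffeomorphism with smooth tame inverse, local in time, which is the assertion. The decisive---and essentially the only new---difficulty is the admissibility check above: one must show the Lopatinski--Shapiro determinant of the linearized reduced Einstein operator coupled with the conformal-volume boundary operator has no zero in $\{\Real\zeta\ge 0\}$, which is delicate precisely because the analogous determinant for Dirichlet data in \cite{II} does vanish there, so the computation has to pin down the way the volume twist shifts the relevant roots; and one must simultaneously ensure that the accompanying boundary gauge for the wave coordinates is compatible with the dissipative estimate for $V$. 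Once these linked points are settled, the remaining steps are the Kreiss-theory energy estimates and Nash--Moser iteration already developed in \cite{I,II}.
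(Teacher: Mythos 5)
Your overall skeleton (linearize, prove well-posedness of the linearized IBVP with tame estimates, propagate the gauge, apply Nash--Moser) matches the paper, but the step you yourself identify as the crux is left entirely unverified, and in the form you state it, it is the wrong claim. You propose to establish well-posedness by checking the uniform Kreiss--Lopatinski condition (no zero of the Lopatinski determinant in $\{\Real\zeta\ge 0\}$, or maximal nonnegativity after a symmetric-hyperbolic reduction). But the boundary conditions here are not a zeroth-order condition on $h$ alone: besides $([g_{\cC}]'_h,\mu'_h)$ one must impose the gauge boundary data $V'_h|_{\cC}$, and its tangential part \eqref{gauget} acts as a \emph{Neumann-type} condition on the component $h(\nu)^{\tT}$. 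For such conditions the strong (boundary-stable, no-loss) estimate fails at glancing frequencies; only an estimate with loss of a fractional derivative holds (Tataru's bound \eqref{NeuE}), which is exactly why the paper's estimates lose derivatives and why Nash--Moser is needed at all. So a proof that the uniform Lopatinski determinant is nonvanishing on the closed half-plane is not something one can "grant": it is not expected to be true for this problem, and no substitute argument is offered. The genuinely new mechanism in the paper is absent from your proposal: the twist $\mu_g$ in \eqref{mug} is chosen so that its linearization $\tfrac12(h_{\nu\nu}+\tfrac{n-2}{n}u)\mu_g$ appears in the tangential gauge equation \eqref{gauget}; the conformal factor variation $u=\tr_{\cC}h^{\tT}$, which is not prescribed, is recovered by solving a wave equation \emph{along} $\cC$ derived from the linearized Hamiltonian constraint (Lemma \ref{u}); this converts the data into Dirichlet data for $h^{\tT}$ and $h_{\nu\nu}$ and Neumann data for $h(\nu)^{\tT}$, and one then iterates (with $O(\e^m)$ error terms) and combines the Dirichlet boundary-stable estimate \eqref{DirE} with the lossy Neumann estimate \eqref{NeuE} to get the tame estimate \eqref{Eest}. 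Without this (or an actual computation replacing it), your argument has no content at its central point.

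Two smaller discrepancies: your gauge-propagation step asserts that $[g_{\cC}]$ and $\mu_g$ induce a "homogeneous dissipative boundary condition" for $V$, again without verification; in the paper the gauge boundary value $V|_{\cC}$ is itself part of the gauged target data and is set to zero on the slice $\cT^H_0$, so $V$ satisfies the homogeneous wave equation \eqref{Veqn} with zero Cauchy and zero Dirichlet boundary data, whence $V\equiv 0$ --- simpler, but a different arrangement than the one you describe. Also, your heuristic that $\mu_g$ "controls a power of the lapse of $\cC$ rather than the conformal factor of $g_{\cC}$" is not how the twist enters: $\mu'_h$ mixes $h_{\nu\nu}$ with $u$, and its role is precisely its appearance in \eqref{gauget}, as the paper notes. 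Finally, the paper proves an off-shell tame diffeomorphism (Theorem \ref{offshell}) for the full gauged map on $Met(M)$, with global injectivity/surjectivity obtained by rescaling and localization, and then obtains Theorem \ref{mainthm} by restriction to $\cT^H_0$; your direct on-shell route could in principle be made to work, but only after the linear theory is actually supplied.
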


  Briefly, smooth vacuum Einstein metrics on $M$, with arbitrarily given smoothly compatible initial and boundary data, exist for a short time, i.e.~in a 
neighborhood $[0, t^*)\times S$ of $S$. Such solutions are uniquely determined in harmonic gauge by their initial data in 
$\cI_0 \times \bV_S'$ and boundary data in $\cB$. Further, such solutions and a time-of-existence function $t^*$ depend smoothly on the 
target data in $[\cI_0 \times \bV'_S \times \cB]_c$; the IBVP is stable with respect to variation of both the initial data and the 
boundary data in $\cB$. 

A version of Theorem \ref{mainthm} holds also off-shell (for general metrics), cf.~Theorem \ref{offshell}. We refer to \cite{Ham} for details 
regarding Frechet spaces and tame maps. 

   The work and discussion to follow is local-in-time and so we we drop extra symbol $^*$ as in \eqref{mainPhiH} for notational simplicity, unless otherwise indicated. 
   
  We note here a simple but significant consequence of the proof of Theorem \ref{mainthm}:
  
  \begin{corollary} \label{Eman}
  The marked moduli space $\cE_*$ and the moduli space $\cE$ of vacuum Einstein metrics on $M$ as above are smooth tame Frechet manifolds. 
  \end{corollary}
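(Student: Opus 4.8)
The plan is to deduce both assertions from Theorem~\ref{mainthm} together with the two identifications recalled in the Introduction: $\cE_* \simeq \bE^H$ from \eqref{harmslice}, and $\cE = \cE_*/\Diff_0(M)$, where $\Diff_0(M)$ acts through the quotient group $G := \Diff_0(M)/\Diff_1(M)$ with the fibers of $\cE_* \to \cE$ being the $G$-orbits. For the marked moduli space the argument is immediate: Theorem~\ref{mainthm} asserts that $\bE^H$ is a smooth tame Frechet manifold, being tamely diffeomorphic to $[\cI_0\times\bV'_S\times\cB]_c$, and \eqref{harmslice} transports this structure to $\cE_*$. One only needs to record, as in \S 2.3, that the harmonic-gauge normalization $g \mapsto \f^*g$ which defines the identification \eqref{harmslice} is itself smooth and tame in $g$.

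For $\cE$ I would make the $G$-action explicit under $\Phi^H$. Every $\f \in \Diff_0(M)$ is the identity on $S \cup \cC$, so the action of $\f$ on a class $[g] \in \cE_*$ leaves the initial data $(g_S,K) \in \cI_0$ and the boundary data $([g_{\cC}],\mu_g) \in \cB$ unchanged; the only effect of $\f$ is on the normal, the future timelike normal of $\f^*g$ along $S$ being $\f_*^{-1}(\nu_S)$ (read off after re-imposing harmonic gauge, using that elements of $\Diff_1(M)$ fix the normal of any metric). Hence, under the tame diffeomorphism $\Phi^H\colon \bE^H \to [\cI_0\times\bV'_S\times\cB]_c$, the $G$-action preserves the $\cI_0$ and $\cB$ factors and acts only on the $\bV'_S$ factor. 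I would then check that $\f \mapsto \f_*(\p_t)|_S$ identifies $G$ with the affine space of transverse vector fields along $S$ agreeing with $\p_t$ along the corner $\Si$, and that this space acts simply transitively on the slices of $\bV'_S$ cut out by the corner-compatibility conditions; freeness of the $G$-action follows from unique continuation for Einstein metrics together with the harmonic normalization. Choosing a smooth tame reference section $\nu_0$ of the $\bV'_S$-factor over $[\cI_0\times\cB]_c$ then provides a global tame slice, so that $\cE_* \to \cE$ is a (tamely trivial) $G$-bundle and
\[
\cE \;\simeq\; [\cI_0\times\{\nu_0\}\times\cB]_c \;\simeq\; \cI_0\times_c\cB ,
\]
a closed smooth tame Frechet submanifold of $[\cI_0\times\bV'_S\times\cB]_c$. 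This proves that $\cE$ is a smooth tame Frechet manifold.

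The step I expect to require the most care is the explicit description of the $G$-action on the $\bV'_S$ factor in the second paragraph: verifying that the timelike normal $\nu_S$ along $S$ can be prescribed by an element of $\Diff_0(M)$, arbitrarily up to the corner condition at $\Si$, and --- more delicately --- that the resulting gauge-fixing together with the re-imposition of harmonic gauge depends smoothly and tamely on all of the data, so that the slice above is genuinely a tame submanifold and the projection $\cE_* \to \cE$ a tame submersion. The remaining verifications are bookkeeping with the compatibility conditions at the corner $\Si$ already analyzed in \S 2.4.
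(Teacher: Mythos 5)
Your treatment of $\cE_*$ is exactly the paper's: Theorem~\ref{mainthm} gives the smooth tame structure on $\bE^H$, and \eqref{harmslice} transports it. For $\cE$, however, your argument rests on a claim that the paper itself explicitly refutes: you assert that because $\f \in \Diff_0(M)$ is the identity on $S \cup \cC$, the action leaves the boundary data $([g_{\cC}],\mu_g) \in \cB$ unchanged and touches only the $\bV'_S$-factor. The conformal class $[g_{\cC}]$ is indeed fixed, but $\mu_g = dv_g(dv_{g_{\cC}})^{-2/n}$ involves the \emph{bulk} volume form restricted to $\cC$, and elements of $\Diff_0(M)$ (and of the gauge-restoring subgroup $\Diff_1(M)$) are subject to no first-order restriction along $\cC$; their normal jet along $\cC$ changes $dv_g|_{\cC}$. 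This is precisely \eqref{mutrans}, and the paper's footnote notes that the induced action on $\{\mu\}$ is in fact \emph{transitive}; the same point underlies the statements $\hat\Phi \neq \Phi^H\circ\pi_*$ in \eqref{notgeo}, the non-injectivity of the quotient map \eqref{Phi0}, and the failure of \eqref{1-1}. Moreover, expressed in the target coordinates of \eqref{mainPhiH}, the action on the $\mu$-component is nonlocal: it goes through the unique harmonic-gauge-restoring $\psi_{\f,g}\in\Diff_1(M)$, which depends on $g$ globally. So the product/bundle picture you draw, with the $G$-orbits lying entirely in the $\bV'_S$-direction and $\cE \simeq \cI_0\times_c\cB$ realized as the level set $\{\nu_S=\nu_0\}$ sitting over a \emph{fixed} $(\cI_0,\cB)$-fiber, is not correct as stated. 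A slice argument of this type is not hopeless --- if one shows the induced action on the $\nu_S$-component alone is (fiberwise) simply transitive, the locus $\{\nu_S=\nu_0\}$ could still meet each orbit once --- but then the burden you flag only in passing becomes the whole proof: smooth tame dependence of the full action, including the nonlocal gauge correction and its effect on $\mu$, which you have not supplied.

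The paper's own proof avoids all of this. It never describes the action in target coordinates: it observes that $G=\Diff_0(M)/\Diff_1(M)$ is a smooth tame Frechet Lie group acting smoothly and freely on the smooth tame Frechet manifold $\bE^H$, and invokes the quotient theorem of Hamilton \cite[Cor.~III.2.5.2]{Ham} to conclude that $\bE^H/G \simeq \cE$ is a smooth tame Frechet manifold. If you want to keep your more explicit route, you must first correct the description of the $G$-action on the $\cB$-factor and then establish the tameness of the slice construction that the appeal to Hamilton renders unnecessary. (Also, freeness does not need unique continuation: an element of $G$ fixing a point of $\bE^H$ already fixes $\nu_S$, and the action on the normal component is free for elementary pointwise reasons.)
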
 
    
    Note that Corollary \ref{Eman} does not depend on any choice of boundary data space $\cB$; it is intrinsic to the structure of the (local-in-time) 
moduli space $\cE$, independent of any map to boundary data. The smoothness of the moduli space $\cE$ has not been established previously. In fact it is known 
to be false in general for the pure Cauchy problem with a compact Cauchy surface $S$ having empty boundary, $\p S = \emptyset$; this issue is closely related 
to the failure of the linearization stability for solutions of the Cauchy problem at Killing initial data, cf.~\cite{Mon}. 

\medskip 

  The boundary data $([g_{\cC}], \mu_g)$ of a metric $g$ are natural and simple to compute. However, the boundary data  is not fully gauge-invariant or 
not fully geometric in the sense of \cite{I}, \cite{AA2}. In other words, consider the extension of the map $\Phi$ in \eqref{mainPhi} to general (non-gauged) 
vacuum Einstein metrics: 
\be \label{mainPhi*}
\begin{split}
&\hat \Phi: \bE \to [\cI_0 \times \bV'_S \times \cB]_c,\\
&\hat \Phi(g) = (g_S, K, \nu_S, [g_{\cC}], \mu_g).\\
\end{split}
\ee
In comparing \eqref{mainPhi*} and \eqref{mainPhiH}, in general, 
\be \label{notgeo} 
\hat \Phi \neq \Phi^H\circ \pi_*,
\ee 
where $\pi_*: \bE \to \cE_*$ is the projection to the orbit space and $\cE_*$ is identified with $\bE^H$ as in \eqref{harmslice}. The map $\hat \Phi$ is not 
constant on the fibers of the projection $\pi_*$; this is due to the fact that the groups $\Diff_1(M)$ and $\Diff_0(M)$ act non-trivially on the target data. To 
explain this in more detail, while the boundary data $[g_{\cC}]$ is invariant under the action of $\Diff_1(M)$ (and $\Diff_0(M)$), the volume term $\mu_g$ 
transforms as 
\be \label{mutrans}
(\f, \mu_g) \to \mu_{\f^*g} = (\f^*dv_g)(dv_{g_\cC})^{-2/n},
\ee
and generally $\f^* dv_g \neq dv_g$ along $\cC$. 

 For example, in $4$-dimensions, the boundary degrees of freedom for the metric $g$ require $10$ independent boundary conditions, $4$ of which are 
 associated with the choice of gauge. Here, of the remaining $6$ degrees of freedom, $5$ are geometric (corresponding to the specification of 
 $[g_{\cC}]$), but the scalar volume term $\mu_g$ is not (since it is not invariant under the action of $\Diff_0(M)$). This may be compared with the 
 works in \cite{FN},\cite{KRSW}. In \cite{FN} at most $3$ boundary conditions may be considered geometric in a restricted sense while the geometric 
 nature of the boundary conditions in \cite{KRSW} is less clear. Moreover, Theorem \ref{mainthm} holds in all dimensions while the works in \cite{FN} and 
 \cite{KRSW} are restricted to $4$-dimensions. 
 
  \medskip 
  
    Theorem \ref{mainthm} should be compared with two other recent related results. First, in \cite{I}, the IBVP was proved to be well-posed for Dirichlet boundary 
  data in an open neighborhood of solutions $(M, g)$ for which the Brown-York stress-energy tensor 
  \be \label{BY}
  \Pi = H g_{\cC} - A,
  \ee 
  has the same Lorentz signature $(- + \cdots +)$ as the induced metric $g_{\cC}$ on $\cC$. Here $A$ is the second fundamental form of $\cC \subset (M, g)$ 
  and $H = \tr_{g_{\cC}}A$ is the mean curvature of the boundary. While this is the first result on well-posedness for geometric boundary data, it does not hold in full generality, 
  i.e.~without some assumption such as \eqref{BY}, cf.~\cite{AA2}, \cite{AGM} for examples of ill-posedness and further discussion. Next, it was recently proved in \cite{LRSW} 
  that the IBVP for (the geometric) conformal mean curvature boundary data $([g_{\cC}], H)$ is ill-posed at the linearized level at certain backgrounds. More precisely, 
  this was proved to be the case when the IBVP is linearized at a standard spherical cylinder in Minkowski $\bR^{1,3}$, These remarks illustrate some the 
  difficulties in establishing a general well-posedness result for a geometric IBVP for the Einstein equations. Theorem \ref{mainthm} may be viewed as a 
  further step in this direction.  
  
  \medskip 
   
       Returning again to Theorem \ref{mainthm}, consider the map $\Phi^H$ in \eqref{mainPhiH} and use the identification \eqref{harmslice} to obtain via 
Theorem \ref{mainthm} the 
diffeomorphism
\be \label{Phi1}
\begin{split}
&\Phi_*: \cE_* = \bE/\Diff_1(M) \to [\cI_0 \times \bV'_S \times \cB]_c,\\
&\Phi_* [g] = (g_S, K, \nu_S, [g_{\cC}], \mu_g).\\
\end{split}
\ee
As noted above, the group $\Diff_0(M)$ acts on $\cE_*$ with quotient $\cE = \cE_*/\Diff_0(M)$ and with fibers given by the orbits of the action of 
$\Diff_0(M)/\Diff_1(M)$ on $\cE_*$. The group $\Diff_0(M)$ also acts on the target space in the standard way. The map $\Phi_*$ is 
equivariant with respect to these actions and so one obtains an induced map on the quotient spaces:
\be \label{Phi0}
 \begin{split}
 &\Phi_*: \cE = \bE/\Diff_0(M) \to \cI_0 \times_c [\cB],\\
&\Phi_* [g] = (g_S, K, [g_{\cC}]).\\
\end{split}
\ee
Here, the action of $\Diff_0(M)$ is trivial on the initial data $\cI_0 = \{(\g_S, \k)\}$ and is transitive on $\bV_S'$, so the quotient of the action on $\bV_S'$ is 
a single point, ignored in the notation above. Similarly, the action is trivial on ${\rm Conf}(\cC) = \{[\g_{\cC}]\}$ but as in \eqref{mutrans} is transitive on 
$C_+^{\infty}(\cC) = \{\mu\}$,\footnote{$\Diff_0(M)$ acts transitively on normal vectors $\nu_{\cC}$ to $\cC$ and thus on $\{\mu\}$.}  and so again 
the quotient of the action on $\{\mu\}$ is a single point. Thus $[\cB] = {\rm Conf}(\cC)$. This discussion immediately proves the following existence result. 

\begin{corollary} \label{surj}
The map $\Phi_*$ in \eqref{Phi0} is locally-in-time surjective. Thus, given arbitrary smooth initial data $(\g_S, \k) \in \cI_0$ together with a smoothly compatible 
conformal class $[\g_{\cC}] \in {\rm Conf}(\cC)$, there exist smooth vacuum Einstein metrics $(M, g)$ such that $\Phi_*(g) = (g_S, K, [g_{\cC}]) = (\g_S, \k, [\g_{\cC}])$. 
\end{corollary}

  However the map $\Phi_*$ is of course no longer injective. There is a full scalar degree of freedom in the domain $\cE$ mapping to the same point in 
target space. Theorem \ref{mainthm} shows that the fiber $(\Phi_*)^{-1}(\g_S, \k, [\g_{\cC}])$ in $\cE$ is naturally and smoothly identified with $\{\mu\} = C^{\infty}(\cC)$. 
In light of \cite{LRSW}, this is less clear in general when $\mu$ is replaced by the mean curvature $H$ of the boundary.  
 
 \medskip 
 
   Note that the boundary data $\cB$ is invariant under a restricted group of unimodular or volume-preserving diffeomorphisms. Thus, given 
 $g \in \bE^H$, let 
 $$\Diff_0^{dv_g} = \{\f \in \Diff_0(M): \f^*dv_g = dv_g \ {\rm at} \ \cC\},$$
 be the subgroup of diffeomorphisms in $\Diff_0(M)$ which fix the volume form of $g$ at $\cC$. Let $\cO_{\Diff_0^{dv_g}(M)}$ be the orbit of the 
 action of $\Diff_0^{dv_g}(M)$ on the space of metrics through the metric $g$. We may then form the bundle 
 $$\w \bE = \bigcup_{g \in \bE^H}\cO_{\Diff_0^{dv_g}(M)}.$$
 If $\w \pi: \w \bE \to \bE^H$ is the projection map, then one does have 
 $$\Phi | _{\w \bE} = \Phi^H\circ \w \pi: \w \bE \to \cI_0\times_c \cB.$$
 Thus $\Phi|_{\w \bE}$ is constant on the fibers of $\w \pi$ and is a diffeomorphism transverse to the fibers by Theorem \ref{mainthm}. 
 However, this does not hold on the full space $\bE$, as noted in \eqref{notgeo}. 
 
 \medskip 
 
    In general Theorem \ref{mainthm} implies two vacuum Einstein metrics $g_1, g_2 \in \bE$ are isometric (within the gauge group $\Diff_0(M)$) if and only 
 if there are diffeomorphisms $\f_i \in \Diff_0(M)$, $i = 1,2$, such that $\f_i^* g_i \in \bE^H$ and 
\be \label{PhiHuni}
\Phi^H(\f_1^*g_1) = \Phi^H(\f_2^*g_2).
\ee
This requires that $g_1$ and $g_2$ have the same initial data, induce the same conformal structure on the boundary $\cC$ and  
$$\f_1^* \mu_{g_1} = \f_2^*\mu_{g_2},$$
i.e.~
$$\f_1^*dv_{g_1} (dv_{(g_1)_{\cC}})^{-2/n} = \f_2^*dv_{g_2}(dv_{(g_2)_{\cC}})^{-2/n}.$$
We point out that this uniqueness for the gauged map $\Phi^H$ in \eqref{PhiHuni} cannot be improved to the ungauged map $\Phi$ in \eqref{mainPhi}.  
Thus it is not true that for general $g_1, g_2 \in \bE$ (not necessarily in harmonic gauge), 
\be \label{1-1}
\Phi(g_2) = \Phi(g_1) \Rightarrow g_2 = \f^*g_1,
\ee
for some $\f \in \Diff_0(M)$, i.e.~$\Phi$ in \eqref{mainPhi} cannot be one-to-one modulo isometry in $\Diff_0(M)$. To see this, suppose $g^1, g^2$ are 
a pair of vacuum Einstein metrics with the same initial data and the same Dirichlet boundary data so that $[g^1_{\cC}] = [g^2_{\cC}]$ and $dv_{g^1_{\cC}} = 
dv_{g^2_{\cC}}$.  As noted above, $\Diff_0(M)$ acts transitively on the space of bulk volume forms $dv_g$ restricted to $\cC$. Hence, there exists $\psi \in \Diff_0(M)$ 
such that for $\w g = \psi^*g^2$, $dv_{\w g} = \psi^*dv_{g^2} = dv_{g^1}$. Since $\Diff_0(M)$ leaves any volume form $dv_{g_{\cC}}$ invariant, it follows that 
$$\Phi(\w g) = \Phi(g^1).$$
Hence the validity of \eqref{1-1} would imply $\w g = \f^* g^1$ and so $g^2 = (\f \circ \psi^{-1})^*g^1$. Thus \eqref{1-1} implies uniqueness of solutions to the Dirichlet 
boundary value problem in $\cE$. However, this was proved to be false in general, (i.e.~without further conditions), in \cite{I}, \cite{AA2}, \cite{AGM}; it follows 
that \eqref{1-1} is also false without further conditions. \footnote{While it would be interesting, it is not clear if the issues discussed above can be overcome by 
passing to gravitational theories having a fixed volume form, such as unimodular gravity \cite{UG} or Weyl transverse gravity \cite{OR}.} 

\medskip

   The proof of Theorem \ref{mainthm} follows the general lines of the main results in \cite{I}. In particular, after a detailed analysis of the linearized problem, 
the non-linear result follows from the Nash-Moser inverse function theorem. 

\medskip 

  A brief summary of the contents of the paper is as follows. In \S 2, we introduce preliminary material needed for the work to follow. In \S 3, we derive the basic 
equations for the boundary data not part of the Dirichlet boundary data and prove the main local apriori energy estimates (Theorem \ref{Eestthm}) and the local 
existence theorem (Theorem \ref{modexist}), both at the linearized level. The work in \S 3 is then extended to the global setting of a general compact 
Cauchy surface in \S 4. The main (non-linear) results are then proved in \S 5 via the Nash-Moser theorem. 

\section{Initial Material}

  In this section, we discuss and derive background material needed for the main arguments to follow. The discussion is similar to that in 
\cite{I}, to which we occasionally refer for further details. 

 \subsection{Basic definitions and notation.}

  Let $S$ be a compact, connected and oriented $n$-manifold with boundary $\p S = \Si$; here the boundary $\Si$ may or may not be connected. Let $M = I\times S$, 
where $I = [0,1]$ is the range of a  time function $t: M \to I$. Let $S_t = \{t\}\times S$; $S_t$ is identified with $S_0$ upon introduction of local coordinates. 
We consider smooth Lorentz metrics $g$ on $M$ which are globally hyperbolic in the sense of manifolds with boundary, where the boundary $\cC = I\times \Si$ is 
timelike with respect to the metric $g$; the induced metric $g_{\cC}$ is thus a globally hyperbolic metric on $\cC$ with closed Cauchy surface $\Si$.\footnote{The 
results of this work thus apply to the `exterior' IBVP, where one of the boundary components of $\cC$ may be taken to infinity, keeping interior boundary 
components bounded.} 
 
   Let $\nu_S$ denote the future timelike unit normal vector field to $S \subset (M, g)$ and similarly let $\nu_{\cC}$ denote the (spacelike) 
 outward unit normal to $\cC \subset (M, g)$. Let 
 $$K = \tfrac{1}{2}\cL_{\nu_S}g |_S \ {\rm and} \  A = \tfrac{1}{2}\cL_{\nu_{\cC}}g |_{\cC},$$
denote the extrinsic curvatures (second fundamental forms) of $S$ and $\cC$ in $(M, g)$ respectively. 

   In local coordinates $x^{\a} = (t, x^a)$ near the corner $\Si$, $x^0 = t \geq 0$ is a defining function for $S = \{t = 0\}$, while $x^1 \leq 0$ is chosen 
to be a defining function for $\cC = \{x^1 = 0\}$. The coordinates $x^{\a}, \a = 2, \cdots, n$ are local coordinates of the corner $\Si$ itself. 
As usual, Greek letters $\a, \b$ denote spacetime indices $0, \dots, n$. Unless otherwise indicated, Roman letters denote spacelike indices $1, \dots, n$ 
and capital Roman letters denote corner indices $2, \dots, n$. 

\medskip 
 
  We will use the standard Sobolev norms to topologize the relevant function spaces. Let $D$ denote a domain contained either in  
$M$, $S$, $\cC$, $\Si$ or the corresponding $t$-level sets $S_t$, $\Si_t$, or the corresponding $t$-sublevel sets, $M_t$, $\cC_t$; $C_t = 
\{p \in \cC: t(p) \leq t\}$ and similarly for $M_t$.  Define the Sobolev $H^s$ norm on 
functions on $D$ by 
 $$||v||_{H^s(D)}^2 = \sum_{k=0}^s\int_{D} |\p_D^k v|^2dv_D,$$
where $\p_D^k$ consists of all partial coordinate derivatives tangent to $D$ of order $\leq s$. In place of coordinate derivatives, one may use the 
components of covariant derivatives of $v$ up to order $s$, given a background Riemannian metric $g_D$ on $D$. The volume form $dv_D$ is also 
induced from the metric $g_{D}$. We will also use the generalizations of the spaces $H^s$ for real-valued $s \in \bR^+$. 

  Define the stronger $\bar H^s$ norm by including all space-time derivatives up to order $s$, so that 
$$||v||_{\bar H^s(D)}^2 = \sum_{k=0}^s \int_{D} |\p_M^k v|^2dv_D,$$
where $\p_M$ denotes partial derivatives along all coordinates of $M$ at $D$. 
 
  Finally, for the Cauchy slices $S_t$, define the boundary stable $H^s$ norm on $S_t$ by 
$$||v||_{\bar \cH^s(S_t)}^2 = ||v||_{\bar H^s(S_t)}^2 + ||v||_{\bar H^s(\cC_t)}^2.$$
  
  We also recall the Sobolev trace theorem for the restriction map $f:D \to \bR$ to $f|_{\p D}: \p D \to \bR$:
$$||f||_{H^{s-1/2}(\p D)} \leq C||f||_{H^s(D)}.$$
It will always be assumed that $s$ is sufficiently large, depending on $n$, so that by the Sobolev embedding theorem, $H^s \subset C^2$.   

  We will use the function space  
\be \label{N}
\cN^s(M) = \cap_{j=0}^s C^j(I, \bar \cH^{s-j}(S)),
\ee
for functions on $M = M_1$. The space $\cN^s$ with associated norm $|| \cdot ||_{\cN^s}$ is a separable Banach space. The normed 
topology \eqref{N} is commonly used for the space of solutions of wave-type equations. 

  Note that 
$$H^{s+1}(M) \subset \cN^s(M) \subset H^s(M).$$

 Let $Met^s(M)$ be the space of globally hyperbolic Lorentz metrics $g$ on $M$, with coefficients $g_{\a\b} \in \cN^s$ in a smooth atlas for $M$. 
Similarly, let $(S^2(M))^{s-2}$ denote the space of symmetric bilinear forms on $M$ with coefficients in $H^{s-2}(M)$. 
 The initial data of $g$ is given by the pair $(g_S, K)$ consisting of the metric $g_S$ induced on the initial Cauchy surface $S = S_0$ and the 
 extrinsic curvature $K$ of $(S, g_S) \subset (M, g)$. This pair takes its values in the space $\cI^s = Met^s(S)\times (S^2(S))^{s-1}$ of all 
 initial data $(\g, \k)$, with the $H^s(S) \times H^{s-1}(S)$ topology.  
 
   For this work, the space of boundary data is 
 $$\cB^{s-1/2} = {\rm Conf}^{s-1/2}(\cC)\times H_+^{s-1/2}(\cC).$$
 The first factor consists of the space of pointwise conformal classes of globally hyperbolic Lorentz metrics on the boundary $\cC$, with 
 coefficients in $H^{s-1/2}(\cC)$. The second factor denotes positive $H^{s-1/2}$ functions on $\cC$. The bulk metric $g$ induces a 
 boundary metric $g_{\cC}$ and scalar $\mu_g = dv_g(dv_{g_{\cC}})^{-2/n}$ with 
 $$([g_{\cC}], \mu_g) \in \cB^{s-1/2}.$$ 
 Here $\mu_g$ is identified with a scalar function on $\cC$ by the choice of a fixed but otherwise arbitrary smooth volume form on $M$ and an 
 induced volume form on the boundary $\cC$. 

  We consider the map 
\be \label{Phi}
\Phi: Met(M) \to [S^2(M) \times \cI \times \cB]_c : = \cT,
\ee
$$\Phi(g) = (\Ric_g, g_S, K, [g_{\cC}], \mu_g),$$
where $(g_S, K)$ and $([g_{\cC}], \mu_g)$ are the initial and $\cB$-valued boundary data induced by $g$. The subscript $c$ denotes the compatibility 
conditions between the initial and boundary data at the corner $\Si$; these are discussed in detail in \S 2.3-2.4. For simplicity, we have dropped 
here the notation for the topologies, indexed by $s$. 

This is the extension of the map $\Phi$ in \eqref{mainPhi} to the space of all globally hyperbolic Lorentz metrics on $M$.

 \subsection{Localization} 
  
   As usual with hyperbolic PDE problems, the main existence results will first be proved locally, in regions where the constant coefficient 
approximation is effective. Such local solutions are then patched together, by a standard and well-known process, via rescaling and a 
partition of unity,

  Given a metric $g$ on $M$ as in \S 1, the localization at a point $p \in \Si$ is a (small) neighborhood $U \subset M$ of $p$, diffeomorphic via 
a local chart to a Minkowski corner 
$$\mathbf R=\{(t=x^0,x^1,\dots ,x^n):t\geq 0, x^1\leq 0\},$$ 
with $S\cap U \subset \{t = 0\}$, $\cC \cap U \subset \{x^1 = 0\}$ and $x^{\a}(p) = 0$. 

  In such a metrically small neighborhood $U$, the metric and coordinates are renormalized simultaneously by rescaling; thus 
for $\l$ small, set 
\be \label{rescale0}
\w g = \l^{-2}g, \ \ \w x^\a=\l^{-1}x^\a,
\ee
so that $\p_{\w x^{\a}} = \l \p_{x^{\a}}$ and  
\be\label{rescale}
\w g(\p_{\w x^\a},\p_{\w x^\b})|_{\w x} = g(\p_{ x^\a},\p_{ x^\b})|_{\l x}, 
\ee
and similarly for variations $h$, $\w h$ of $g$ and $\w g$. While the components $g_{\a\b}$ of $g$ are invariant under such a rescaling, 
all higher derivatives become small: 
\be \label{lambda}
\p_{\w x^{\mu}}^k \w g_{\a\b} = \l^k \p_{x^{\mu}}^k g_{\a\b} = O(\l^k).
\ee
Thus (for smooth metrics $g$) the coefficients are smoothly close to constant functions in the rescaled chart, 
\be \label{eps}
||\w g - g_{\a_0}||_{C^{k}(U)} \leq \e^k = \e^k(\l, g),
\ee
where $g_{\a_0}$ is a flat (constant coefficent) Minkowski metric. This is the frozen coefficient approximation. 

  The $\l \to 0$ limit blow-up metric $g_{\a_0}$ is the Minkowski metric, given in adapted coordinates as above by 
\be \label{Mincor}
g_{\a_0} =-dt^2-\a_0 dtdx^1+\sum_{i=1}^n(dx^i)^2.
\ee
Here $\a_0 = \a(p) = \<\nu_S, \nu_{\cC}\>_p$ is the value of the (scale-invariant) corner angle at $p$. 
Note that for $g_{\a_0}$,
\be \label{normals}
\nu_S = - \nabla t = (1+\a_0^2)^{-1/2}(\p_0+\a_0 \p_1),  \ \ \nu_\cC=\nabla x^1 = (1+\a_0^2)^{-1/2}(\p_1-\a_0 \p_0).
\ee 
Note also that the leading order Sobolev norms scale as follows: for $D \subset S$ or $D \subset \cC$ of dimension $n$, 
\be \label{Sobrescale}
||f||_{H_{\w g}^s(D)}^2 = \l^{2s-n} ||f||_{H_g^s(D)}^2,
\ee
$$||f||_{\cN_{\w g}^s(M)}^2 = \l^{2s-n} ||f||_{\cN_g^s(M)}^2 . 
\footnote{Here we are abusing notation and taking only the leading order $s$ derivatives of $f$; the scaling of the full Sobolev norm 
is anisotropic. Also, the range of the coordinates $\w x^{\a}$ and $x^{\a}$ has been suppressed in these equations but must suitably 
be taken into account.} $$
 
  \medskip 
 
   Also as in \cite{I}, $U$ is always assumed to be smoothly embedded in a larger region $\w U \subset M$,  
$$U \subset \w U,$$
with $\w U$ still covered by the adapted coordinates $(t, x^i)$, with $t \geq 0$ and $x^1 \leq 0$ in $\w U$ so that the initial surface $S$, boundary 
$\cC$ and corner $\Si$ in $\w U$ are an extension of the corresponding domains in $U$. We also assume \eqref{eps} still holds in $\w U$. 
All target data in $\cT$ as in \eqref{Phi} and later in $\cT^H$ which is given in $U$ is extended off $U$ to be of compact support in $\w U$ away 
from $S\cap U$ and $\cC \cap U$. Thus in such localizations, all target data vanishes in a neighborhood of the full timelike boundary of $\w U$ and in 
a neighborhood of the initial slice $\{t = 0\}$ away from $\cC \cap U$ and $S \cap U$ respectively. The same statements hold for variations of the 
target data, i.e.~in the tangent spaces $T(\cT)$ or $T(\cT^H)$. 

  From the geometric nature of the map $\Phi$ and target space $\cT$ in \eqref{Phi}, together with later variants below, it is easily verified that all the results 
  to follow are invariant under rescaling; they hold for $g$ and linearizations $h$ if and only if they hold for (all) localizations $\w g$ and $\w h$, 
  cf.~also \cite{I}. 

   For later reference, we note that solutions $h$ of linear systems of wave equations on $\w U$ appearing below have the finite propagation speed 
property; if the initial and boundary data of $h$ are compactly supported in $\w U$, then $h$ is also compactly supported in $\w U$, for some definite 
(possibly small) time $t > 0$. 

\subsection{Geometry at the corner}

  In this subsection, we discuss the geometry of the ambient metric $g$ at the corner $\Si$ and related compatibility conditions. First, let 
$\s$ denote a fixed background metric in the conformal class $[g_{\cC}]$, so that $g_{\cC} = \f^2 \s$, where $[\s]$ is determined by 
the target data but $\f$ is not. However, since the initial metric $g_S$ is target data, the induced metric $g_{\Si}$ on $\Si$ is also 
determined by target data. It follows that $\f |_{\Si}$ is determined by the initial and boundary data; one may set $\f = 1$ with suitable choice 
of $\s$. Consequently, the full metric $g_{\mu\nu}$ at $\Si$ is determined at the corner except possibly for the corner angle 
\be \label{angle}
\a = \<\nu_S, \nu_{\cC}\>.
\ee
In fact the next result shows that $\a$ is determined at $\Si$ by the initial and boundary data. 

\begin{lemma} \label{corner}
In local coordinates $\{x^{\mu}\}$ near $\Si$ with $g_{11} = f^2$ and $g_{1a} = 0$, $a = 2, \dots, n$ (zero shift), one has 
\be \label{dvg}
dv_g = \sqrt{f^2 det(-g_{\cC}) + \a^2 det(g_{\Si})} dx^0 \wedge \cdots \wedge x^n,
\ee
where $\a = -g_{01}$. Consequently, 
\be \label{mu1}
\mu_g = \frac{\sqrt{f^2 det(-g_{\cC}) + \a^2 det(g_{\Si})}}{[det(-g_{\cC})]^{1/n}},
\ee
and so at the corner $\Si$, $\a$ is uniquely determined, up to sign, by the initial and boundary data in $\cI \times \cB$. 

\end{lemma}

\begin{proof}
The formula \eqref{dvg} is a simple calculation in local coordinates, as is the relation $\a = -g_{01}$ and the formula \eqref{mu1}. As noted above, 
at $\Si$, both $f^2 = g_{11}$ and the conformal factor $\f$ are determined by initial data. Hence, $g_{\cC}$ and $g_{\Si}$ are determined at 
$\Si$ by the target data in $\cI \times \cB$. Thus the second statement follows from the formula \eqref{mu1}. 
\end{proof}

\begin{remark}
{\rm  A change $\a \to -\a$ of the sign of $\a$ corresponds to a change in time orientation $\p_t \to -\p_t$ and so this distinction will be 
ignored in the following. 
}
\end{remark}

\begin{remark} \label{vel}
{\rm The initial velocity of $\f$ at $\Si$ is also determined by the target data; in fact, one has the following expression  
\be \label{velocity}
(n-1)T_{\cC}(\f) = \sqrt{1+\a^2}\tr_\Si K_S - \a H_\Si - H_\s.
\ee 
where $T_{\cC}$ is the unit timelike normal to $\Si$ in $(\cC, g_{\cC})$ and $H_{\s}$ is the mean curvature of $\Si$ in $(\cC, \s)$. To prove 
\eqref{velocity}, note first that simple calculation shows that 
\be \label{TC}
T_{\cC} = \sqrt{1+\a^2} \nu_S - \a n_S,
\ee
where $n_S$ is the unit normal to $\Si$ in $(S, g_S)$. Taking the trace over $\Si$ of the covariant derivative of \eqref{TC} and applying a standard 
formula for the behavior of the mean curvature under conformal changes then gives \eqref{velocity}. Note that all the terms on the right side of \eqref{velocity} 
are determined by the initial and boundary data. The linearization of this result is used in Lemma \ref{u} below. 
}
\end{remark}

\subsection{Gauge analysis.}

  Due to gauge issues, the map $\Phi$ in \eqref{Phi} is well-known to be neither surjective nor injective, or given by non-linear hyperbolic evolution equations. 
A choice of gauge is used to deal with these issues. This will be described here by a choice of vector field $V$ on $M$ and replacing the vacuum Einstein equations 
\eqref{vacuum} by the reduced Einstein equations 
\be \label{Q}
 \Ric_g + \d^*V = 0,
\ee
where $\d^*V = \frac{1}{2}\cL_V g$. We choose here the (generalized) harmonic or wave coordinate gauge, given by 
\be \label{harm}
V_g = \Box_g x^{\a}\p_{x^{\a}},
\ee
where $\Box_g$ is the d'Alembertian or wave operator with respect to $g$. It is well-known that the gauge \eqref{harm} can be naturally globalized 
to all of $M$ by choice of a wave map to a fixed target space, cf.~\cite{I} for instance.  

  The gauged version of the map $\Phi$ in \eqref{Phi} is then given by 
\be \label{PhiH}
 \Phi^H: Met(M) \to \cT^H,
 \ee
$$\Phi^H(g) = \big((\Ric_g + \d_g^* V_g), (g_S, K, \nu_S, V|_S), ([g_{\cC}], \mu_g, V|_{\cC}) \big),$$ 
where $V|_S$ and $V|_{\cC}$ are the restrictions of $V_g$ to $S$ and $\cC$ respectively and $\nu_S$ is the future unit normal to $S$ in $(M, g)$. In 
$4$-dimensions, the initial data $(g_S, K, \nu_S, V|_S)$ consists of the (required) 20 degrees of freedom, while the boundary conditions 
$([g_{\cC}], \mu_g, V|_{\cC})$ consist of the (required) 10 degrees of freedom. Similar consistency holds in all dimensions. 

 The target space $\cT^H$ in \eqref{PhiH} is given as follows. Let $\cU$ denote the uncoupled space of target data, so that 
 $$\cU =  S^2(M) \times \cI \times \bV_S' \times \bV_S \times \cB \times \bV_{\cC}.$$
This is a product space, extending the product space in \eqref{Phi};  $\bV'_S$ is the space of $H^s$ vector fields along $S$ nowhere 
tangent to $S$, $\bV_S$ is the space of $H^{s-1}$ vector fields $V_S$ along $S$ and $\bV_{\cC}$ is the space of $H^{s-3/2}$ vector 
fields $V_{\cC}$ along $\cC$. By construction, $\cU$ is a separable Hilbert space, and Frechet space when $s = \infty$, i.e.~for $C^{\infty}$ 
data.  

  The metric $g$ induces data in $\cU$ through the map $\Phi^H$. The two equations for $g$, namely the bulk equation $\Ric_g + \d^*V_g = Q$ 
and the definition of the gauge field $V_g$ in \eqref{harm}, as well as definition of the normal vector $\nu_S$, determine compatibility 
conditions for the data in $\cU$ at the corner $\Si$. These compatibility conditions are expressed in terms of equations on the data in $\cU$. 
The subspace of compatible target data, denoted by   
$$\cU_c = [S^2(M) \times \cI \times \bV_S' \times \bV_S \times \cB \times \bV_{\cC}]_c \subset \cU, $$
is then the subspace satisfying these compatibility conditions. The compatibility conditions are expressed as the zero-locus of a (large) collection of 
functions on $\cU$, ordered according to the degree of differentiability.  

Note that the initial gauge velocity $\p_t V |_S$ is not prescribed by the data in $\cU$ or $\cU_c$. 
In fact, the gauge velocity at $S$ is only prescribed by imposing the constraint equations, i.e.~the Gauss and Gauss-Codazzi equations, 
on $S$: 
\be \label{Gauss}
|K|^2 - (\tr_{g_S}K)^2 - R_{g_S} = R_g - 2\Ric_g(\nu_S,\nu_S) = \tr_{g_S}\Ric_g - 3\Ric_g(\nu_S,\nu_S),
\ee
\be \label{GC} 
{\rm div}_{g_S}(K - Hg_S) = \Ric_g(\nu_S, \cdot). 
\ee 
To express this in terms of target data in $\cU_c$, denote
$$Q = \Ric + \d^*V,$$
so that \eqref{Gauss}-\eqref{GC} becomes 
\be \label{VGauss}
|K|^2 - (\tr_{g_S}K)^2 - R_{g_S} = \tr_{g_S}(Q - \d^*V) - 3(Q(\nu_S, \nu_S) - \d^*V(\nu_S,\nu_S)),
\ee
\be \label{VGC} 
{\rm div}_{g_S}(K - Hg_S) = Q(\nu_S, \cdot) - \d^*V(\nu_S, \cdot). 
\ee 
It is straightforward to check that $\d^*V$ is determined along $S$ by the initial data $(V|_S, \p_t V |_S)$. 
Let then $\bV_S$, the space of vector fields along $S$, representing the target space for $\p_t V |_S$ and consider the constraint operator 
 $$C: \cU_c \times \bV_S \to \bV_S,$$
\be \label{Const}
\begin{split}
& C(Q, \g_S, \k, V_S, \p_t V_S, \nu_S, \g_{\cC}, V_{\cC}) = 
\begin{cases}
|\k|^2 - (\tr_{\g_S}\k)^2  - R_{\g_S} - \tr_{g_S}(Q - \d^*V) + 3(Q(\nu_S, \nu_S) - \d^*V(\nu_S,\nu_S)),\\
{\rm div_{\g_S}}  \k - d_S(\tr_{\g_S}\k) - Q(\nu_S, \cdot) + \d^*V(\nu_S, \cdot). \\
\end{cases}
\end{split}
\ee
We then define 
 \be \label{TH}
 \cT^H = (\cU_c \times \bV_S)_0,
 \ee
to be the zero-set of the constraint operator $C$. By inspection in \eqref{Const}, on $\cT^H$, $\p_t V_S$ is uniquely determined by the remaining 
data in $\cU_c$, and so $\cT^H$ is naturally identified with $\cU_c$. 

The compatibility conditions on the initial and boundary data at the corner $\Si$ were analysed in detail in \cite{I} for Dirichlet boundary data; the 
analysis for the twisted Dirichlet (or conformal-volume) boundary data $([g_{\cC}], \mu_g)$ proceeds in exactly the same way with only minor changes 
and so we refer to \cite{I} details. In particular we note the following result, but refer to \cite{I} for the proof. 

\begin{proposition}\label{manifold}
With $C^{\infty}$ data, the target space $\cT^H$ in \eqref{TH} is a smooth tame Frechet manifold. 
\end{proposition}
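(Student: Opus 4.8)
The plan is to reduce the claim to the submanifold property of the compatibility locus $\cU_c\subset\cU$ and then transfer it to $\cT^H$. Working with $C^{\infty}$ data, $\cU$ is a tame Frechet space, and by \eqref{Const} together with the remark following \eqref{TH}, on the zero set $\cT^H=(\cU_c\times\bV_S)_0$ the component $\p_t V_S$ is given by an explicit expression, algebraic in $\nu_S$ and involving the covariant constraint operators applied to the remaining data $(Q,\g_S,\k,V_S,\nu_S,\g_{\cC},V_{\cC})$. This expression is a tame map of $\cU_c$ into $\bV_S$, so $\cT^H$ is its graph over $\cU_c$; since graphs of tame maps are tame Frechet submanifolds (diffeomorphic to the domain), it suffices to prove that $\cU_c$ is a smooth tame Frechet submanifold of $\cU$.

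For that I would follow the corner analysis of \cite{II}, indicating only the changes forced by using $([g_{\cC}],\mu_g)$ in place of the Dirichlet datum $g_{\cC}$. The conditions cutting out $\cU_c$ arise by restricting to the corner $\Si$ the bulk equation $\Ric_g+\d^*V=Q$, the gauge definition \eqref{harm}, and the definition of $\nu_S$, and then differentiating tangentially along $\Si$ while eliminating transverse derivatives by means of the evolution equations themselves. Ordered by the number of derivatives, this produces a hierarchy of equations of the schematic form: a new transverse jet component of $g$ at $\Si$ equals a universal expression, algebraic and first order in the lower-order jet components and in the target data $Q,\g_S,\k,\nu_S,[\g_{\cC}],\mu_g,V|_S,V|_{\cC}$. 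Following \cite{II}, I would use this triangular structure to build, near any $\tau_0\in\cU_c$, a tame change of coordinates on $\cU$ that splits off the resolved components, thereby exhibiting $\cU_c$ locally as the graph of a tame map over the complementary free data; hence $\cU_c$ is a closed tame Frechet submanifold. All maps entering this construction --- tangential differentiations, contractions with $\nu_S$, and divisions by the positive quantities $\det(-g_{\cC})$, $\det(g_\Si)$ and $\mu_g$ --- are tame, and the conditions defining the ambient open set (global hyperbolicity, $\cC$ timelike, $\mu_g>0$) are open and do not affect the manifold structure.

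The one genuinely new ingredient relative to \cite{II} is the dictionary between $([g_{\cC}],\mu_g)$ and the metric at the corner. Writing $g_{\cC}=\f^2\s$ with $[\s]$ the chosen representative of $[g_{\cC}]$, the conformal factor $\f$ and its corner velocity $T(\f)$ are fixed by the initial data through \eqref{velocity}, so $g_{\cC}$ and its first transverse derivative at $\Si$ are recovered from data in $\cI\times\cB$; and the corner angle $\a$ is recovered, up to the sign ambiguity corresponding to time orientation which we ignore, by solving \eqref{mu1} for $\a^2$, the inequality $\mu_g^2[\det(-g_{\cC})]^{2/n}\ge f^2\det(-g_{\cC})$ being one of the conditions recorded above. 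With $g$ and its relevant corner jets thus determined from the target data, the remainder of the corner bookkeeping coincides with that of \cite{II}.

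I expect the main obstacle to be precisely this corner bookkeeping: one must verify that the infinitely many compatibility equations resolve triangularly --- that at each differentiability order the correct number of new components is fixed and no circular dependence arises --- and that the resolving maps, assembled over all orders, define a tame diffeomorphism of $\cU$ onto the product of the free and resolved factors. This is the content of the corresponding argument in \cite{II}; since replacing $g_{\cC}$ by $([g_{\cC}],\mu_g)$ alters only the algebraic dictionary at the corner, and not the triangular structure of the hierarchy, that argument carries over with the modifications indicated above.
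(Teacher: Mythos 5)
Your proposal is correct and follows essentially the same route as the paper: the paper itself identifies $\cT^H$ with $\cU_c$ via the constraint operator \eqref{Const} and then defers the corner compatibility analysis to \cite{II}, noting (as you do) that the only new ingredient for the twisted Dirichlet data is the corner dictionary — the determination of $\f$, its velocity via \eqref{velocity}, and the angle $\a$ via \eqref{mu1}. Your sketch just makes explicit the triangular resolution of the compatibility hierarchy that the paper leaves to the citation.
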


\medskip 

  The overall method of proof of well-posedness is to prove that $\Phi^H$ in \eqref{PhiH} is a tame Fredholm map with a tame inverse at the 
linearized level and to then apply the Nash inverse function theorem in Frechet spaces. Thus, most of the analysis to follow deals with the 
study of the linearization 
\be \label{DF2}
D\Phi_g^H: T_g Met(M) \to T_{\tau}(\cT^H),
\ee
where $\tau = \Phi^H(g)$. 

\medskip 

  The bulk term in the linearization $D \Phi_g^H$ is given by 
 \be \label{L}
L(h) = \Ric'_h + \d^*V'_h + (\d^*)'_hV.
\ee
The linearization of the gauge field $V = V_g$ is given by 
\be \label{V'h}
V'_h = \b_g h - \<D^2 x^{\a}, h\>dx^{\a},
\ee
where $\b_g = \d_g + \frac{1}{2}d\tr_g h$ is the Bianchi operator. The bulk linearized equation $L(h) = F$ then has the form 
\be \label{L2}
L(h) = \tfrac{1}{2}D^*D h + P(h) = F,
\ee
where $P$ is a first order linear differential operator on $h$ and $D^*D = -\nabla_{e_{\a}}\nabla_{e_{\a}} + \nabla_{\nabla_{e_{\a}} e_{\a}}$ is the tensorial 
wave operator on symmetric bilinear forms. Of course all the coefficients of the derivatives in \eqref{L2} depend on the background metric $g$ at 
which the linearization is formed. 
 
  The equation \eqref{L2} may be written in local coordinates in the form 
 \be \label{L4}
(L(h))_{\a\b} = -\tfrac{1}{2} \Box_g (h_{\a\b}) + P_{\a\b}(h) = F_{\a\b}. 
\ee
where $P(h)$ involves $h$ and its first derivatives. In the localized context discussed in \S 2.2, the coefficients of $P$ are 
all of order $\e$, $P(h) = \e(\p h, h)$. Here $\Box_g f = \frac{1}{\sqrt{-g}}\p_{\a}(g^{\a\b}\sqrt{-g}\p_{\b}f)$ is the D'Alembertian or 
wave operator with respect to $g$. 

\medskip 

  In later sections, we will repeatedly need to solve linear wave equations of the form \eqref{L4}, i.e. 
\be \label{L0}
-\tfrac{1}{2}\Box_g v + p(v)  = \f,
\ee
for given $\f$. Here $v$ may be either scalar valued or vector-valued and again $p(v)$ is first order in $v$. Although solvability of the linear 
equation or system \eqref{L0} holds more generally, we will only need to actually solve such systems in $C^{\infty}$, so with $g$, $p$ and 
$\f$ in $C^{\infty}$. 

 The next Lemma is well-known and will be used repeatedly in \S 3. 

\begin{lemma} \label{Dir222}
For Dirichlet boundary data, the $C^{\infty}$ system \eqref{L0} is solvable for any smooth $\f$, with standard energy estimates. 
\end{lemma}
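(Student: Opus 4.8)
The plan is to read \eqref{L0} as a linear hyperbolic initial-boundary value problem for the second order operator $L_0 v := -\tfrac12\Box_g v + p(v)$ on $M = I\times S$, with $\cC$ as timelike boundary: one prescribes Cauchy data $(v|_S,\p_t v|_S)$ on $S$, Dirichlet data $v|_{\cC}$ on $\cC$, all smooth and smoothly compatible at the corner $\Si$, and seeks a smooth $v$ on $M$ solving \eqref{L0}. The principal symbol of $L_0$ is $\tfrac12 g^{\a\b}\xi_\a\xi_\b$ (tensored with the identity in the vector-valued case), which is Lorentzian, so $L_0$ is hyperbolic with respect to the foliation $\{S_t\}$, the slices $S_t$ are spacelike (non-characteristic) and $\cC$ is timelike; Dirichlet conditions on a timelike boundary are the classical dissipative boundary conditions for the wave operator and satisfy the relevant well-posedness (Lopatinski) condition automatically.

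First I would establish the basic $H^1$ energy identity: contract the stress-energy tensor of $v$ associated to $\Box_g$ with the timelike field $\p_t$ (equivalently $\nu_S$) and apply the divergence theorem on the sublevel set $M_t$. This yields
\[
E(t) \le C\Big( E(0) + \int_0^t E(s)\,ds + \|\varphi\|_{L^2(M_t)}^2 + B(t) \Big),
\]
with $E(t) = \|v(t)\|_{\bar H^1(S_t)}^2$ and $B(t)$ a term involving only derivatives of $v$ tangential to $\cC$, hence controlled by the prescribed Dirichlet data; the lower order term $p(v)$ and the deformation tensor of $\p_t$ are absorbed. Gronwall's inequality closes this. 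Higher order (``standard'') energy estimates follow by the usual bootstrap: differentiate \eqref{L0} in the directions tangent to $\cC$, i.e.\ in $t$ and along $\Si$, commute through $\Box_g$, and use the equation itself to solve algebraically for the doubly normal derivative $\p_{x^1}^2 v$ in terms of derivatives with at most one normal derivative --- this is exactly where non-characteristicity of $S$ and timelikeness of $\cC$ enter. Inducting on $s$ gives control of $\|v\|_{\cN^s(M_t)}$ by the data in $\cN^s$, and in particular uniqueness.

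For existence I would go through the localization of \S2.2. Near a corner point the metric is an $\e$-perturbation of the flat Minkowski corner metric $g_{\a_0}$ of \eqref{Mincor}, for which the Dirichlet IBVP is solvable by classical means (Galerkin approximation in the spatial variables, or odd reflection of the Dirichlet data across $\cC$ reducing to the pure Cauchy problem on a doubled region). Away from $\cC$ one has the ordinary Cauchy problem and away from $S$ the mixed problem on a manifold with timelike boundary and no corner, both standard. The $\e$-sized perturbation term $P(h)=\e(\p h,h)$ together with the patching errors are absorbed by the usual short-time iteration/contraction argument built on the constant-coefficient energy estimate above, with finite propagation speed (end of \S2.2) keeping supports under control; a partition of unity globalizes over $S$ and $\cC$. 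Smoothness of $v$ up to $S$, $\cC$ and $\Si$ comes from the higher order estimates together with the assumed smooth corner compatibility, which matches $\p_t^k v|_S$ with $\p_t^k(v|_\cC)$ at $\Si$ to all orders.

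The main obstacle is the behaviour at the corner $\Si = S\cap\cC$: without compatibility one obtains only finite regularity, and even with it the energy argument must be arranged so the fluxes through $\cC_t$ and through $S$ fit together across $\Si$ without spurious corner contributions. Since everything here is $C^\infty$ with smoothly compatible data, this is precisely the setting developed in Parts I and II \cite{I,II}, so the lemma is in the end an appeal to that framework; the substance of the argument is the energy identity and the tangential-differentiation bootstrap described above.
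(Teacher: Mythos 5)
The paper does not actually prove this lemma: it records that solvability and the estimate are classical, cites \cite{BS}, \cite{Sa}, and states the boundary-stable estimate \eqref{DirE} for later use. Your outline (energy identity, tangential-differentiation bootstrap using non-characteristicity to recover $\p_{x^1}^2v$, localization near the corner, iteration off the frozen-coefficient problem, corner compatibility for smoothness) is the standard route those references follow, so in spirit you are reconstructing the argument the paper delegates to the literature, and the existence/uniqueness part of your sketch is essentially fine.

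There is, however, a genuine gap in the estimate part, and it sits exactly at the point the paper cares about. With the multiplier $\p_t$ (tangent to $\cC$), the flux through the lateral boundary is $\int_{\cC_t} T(\p_t,\nu_{\cC}) = \int_{\cC_t}\p_t v\,\nu_{\cC}(v)$, since $g(\p_t,\nu_{\cC})=0$; this contains the \emph{unknown Neumann trace} $\nu_{\cC}(v)$, so your claim that $B(t)$ ``involves only derivatives of $v$ tangential to $\cC$, hence is controlled by the prescribed Dirichlet data'' is false as stated, and Gronwall does not close. For homogeneous Dirichlet data the flux vanishes, and for inhomogeneous data one can subtract an extension of the boundary value, but that argument costs roughly half a derivative relative to \eqref{DirE} (the extension only lands the forcing in $H^{s-3/2}$). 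To obtain the sharp, boundary-stable form \eqref{DirE} --- with only $\|v\|_{H^s(\cC_t)}$ on the right and the full $\bar\cH^s$ norm, including the normal derivative on $\cC$, on the left --- one needs the hidden-regularity step: a Rellich/Morawetz-type identity with a multiplier transversal to $\cC$ (e.g.\ an interior extension of $\nu_{\cC}$), which bounds $\|\nu_{\cC}(v)\|_{H^{s-1}(\cC_t)}$ by the energy, the tangential traces and the forcing, and which is then combined with the $\p_t$ identity; see \cite{BS}, \cite{Sa}, or the sharp trace results in \cite{MT}, \cite{Tat}. Since it is precisely the no-loss character of \eqref{DirE}, in contrast with the $2/3$-derivative loss in the Neumann estimate \eqref{NeuE}, that drives the energy scheme of \S 4, this missing step is the substantive content of the lemma and should be supplied or explicitly cited rather than absorbed into ``$B(t)$ is controlled by the data.''
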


\begin{proof} 

  This is a standard and well-known result, cf.~\cite{BS}, \cite{Sa} for further details. Of course it is assumed here implicitly that the initial data 
and boundary data for $v$ satisfy the necessary smooth compatibility conditions (as above) at the corner $\Si$. Energy estimates for such 
equations are also standard; for later reference, we provide the main result here. Thus, the strong or boundary stable energy estimate with 
Dirichlet boundary data states that any (smooth) solution of \eqref{L0} satisfies the bound 
\be \label{DirE}
||v||_{\bar \cH^s(S_t)}^2 \leq C[||v||_{H^s(S_0)}^2 + ||\p_t v||_{H^{s-1}(S_0)}^2 + ||v||_{H^s(\cC_t)}^2 + ||\f||_{H^{s-1}(M_t)}^2],
\ee
where $t \in [0,1]$ and $C$ is a constant depending only on $g$ and the coefficients of $p$. Note the difference in the stronger and weaker 
norms on the left and right of \eqref{DirE}. We recall that $\cC_t = \{p \in \cC: t(p) \leq t\}$ and similarly for $M_t$. 

   The estimate \eqref{DirE} does hold for all $t \in [0, \infty)$, but then with a constant $C = C(t)$ which may grow (possibly exponentially) in time. 
This assumes of course that $g$ is defined on $[0, \infty)\times S$. 

\end{proof}

\begin{remark} \label{Neumann} 
{\rm One also has existence and uniqueness for smooth solutions of the equation \eqref{L0} with given Neumann boundary data $b(v) = \nu_{\cC}(v)$. 
However, in this case, there is no effective energy estimate as in \eqref{DirE}; instead the energy estimate only holds with a loss of a (fractional) derivative. 
The optimal result in general is: 
\be \label{NeuE}
|v||_{\bar \cH^s(S_t)}^2 \leq C[||v||_{H^s(S_0)}^2 + ||\p_t v||_{H^{s-1}(S_0)}^2 + ||\nu_{\cC}(v)||_{H^{s-1/3}(\cC_t)}^2 + ||\f||_{H^{s-1}(M_t)}^2],
\ee
so there is a 'loss' of $2/3$ of a derivative. We refer to \cite[Ch.8]{MT}, and in particular to \cite[Thm.9]{Tat} for the proof and further details. 
}
\end{remark}

 We next discuss the gauge field $V = V_g$ and its linearization $V'_h$ at $g$. Let 
\be \label{V0}
\Ric_g + \d^*V = Q,
\ee
so $Q$ is determined by the target data in $\cT^H$.  

\begin{lemma}\label{Gauge-lemma}
Let $\Box_g$ denote the wave operator $\Box_g = - D^*D$ acting on vector fields $V$ on $M$. Then  
\be \label{V}
-\tfrac{1}{2}[\Box_g + \Ric_g](V) = \b_g Q.
\ee
The initial data $V|_S$, $(\p_t V)|_S$ and boundary data $V|_{\cC}$ are determined by the target data  
$$(\Ric + \d^*V, (g_S, K_S, \nu_S, V|_S), V|_{\cC})$$
in $\cT^H$ and hence $V$ on $M$ is uniquely determined by target data. 

\end{lemma}

\begin{proof}
Applying the Bianchi operator $\b_g = \d_g + \frac{1}{2}d \tr_g$ to \eqref{V0} gives 
\be \label{V2}
\b_g \Ric_g + \b_g \d^*V = \b_g(Q),
\ee
which, by the Bianchi identity $\b_g \Ric_g = 0$, gives \eqref{V} via a standard Weitzenbock formula 

The Dirichlet data for $V$ along $S$ and $\cC$ are target data in $\cT^H$. We claim that the $t$-derivative $\p_t V$ on $S$ is also 
determined by target data. This follows from the Gauss and Gauss-Codazzi identities \eqref{Gauss}-\eqref{GC}. For $\nu = \nu_S$ the 
unit timelike normal to $S$, these identities show that the form $E(\nu, \cdot) = \Ric(\nu, \cdot) - \frac{1}{2}R g(\nu, \cdot)$ is determined by 
initial data $\iota = (g_S, K_S)$ along $S$. By \eqref{V0}, it follows that $\d^*V(\nu, \cdot) - \frac{1}{2}\tr \d^*V g(\nu, \cdot)$ is determined by 
initial data. Since $V$ is determined along $S$ by the target data, it follows easily that $\nabla_{\nu}V$ and hence $\p_t V$ is determined 
along $S$ by target data. It follows from the standard existence and uniqueness of solutions to the wave equation \eqref{V} with given initial 
and Dirichlet boundary data that $V$ is uniquely determined by target data. 

\end{proof}

 A similar result holds for the linearization $V'_h$ of the gauge field $V = V_g$. 
\begin{lemma}\label{Gauge-lemma2}
We have 
\be \label{V'}
-\tfrac{1}{2}(\Box_g + \Ric_g)V'_h + \tfrac{1}{2}\nabla_V V'_h = \b_g F + \b'_h \Ric_g + O_{2,1}(V,h),
\ee
where $O_{2,1}(V,h)$ is $2^{\rm nd}$ order in $V$ and $1^{\rm st}$ order in $h$. Moreover, $O_{2,1}(V,h) = 0$ if $V = 0$. 

  The initial data $(V'_h)|_S$, $(\p_t V'_h)|_S$ and boundary data $V'_h|_{\cC}$ are determined by the target data  
$$\big((\Ric + \d^*V)'_h, (h_S, K'_h, (\nu_S)'_h, V'_h), (V'_h)_{\cC}\big)$$
in $T(\cT^H)$.
\end{lemma}

\begin{proof}
Again applying the Bianchi operator to both sides of \eqref{L} gives:
\bes
\b_g \Ric'_h+\b_g \d^* V'_h+\b_g[(\d^*)'_h V]=\b_g F
\ees
which via the Weitzenbock formula as before implies 
\be \label{V'1}
-\tfrac{1}{2}[\Box V'_h+\Ric_g(V'_h)]=\b_g F + \b'_h \Ric_g -\b_g[(\d^*)'_h V].
\ee
Simple calculation gives $(\d^*)'_hV = \frac{1}{2}\nabla_V h + \d^*V\circ h$, so that 
 $$\b[(\d^*)'_hV] = \tfrac{1}{2}\b(\nabla_Vh) + O_{2,1}(V,h) = \tfrac{1}{2}\nabla_V \b h + O_{2,1}(V,h) = \tfrac{1}{2}\nabla_V V'_h + O_{2,1}(V,h),$$
where we have used \eqref{V'h} in the last equality. This gives \eqref{V'} 

  As in Lemma \ref{Gauge-lemma}, the Dirichlet data for $V'_h$ along $S$ and $\cC$ are given as target space data and we use the 
constraint equations \eqref{Gauss}-\eqref{GC} to determine the initial velocity $\p_t V'_h$. As before, the bulk equation yields:
\bes
\begin{split}
&(\Ric-\tfrac{1}{2}Rg)'_h(\nu_S)+[\d^*V-\tfrac{1}{2}({\rm div}V)g]'_h(\nu_S)\\
&=[F-\tfrac{1}{2}(\tr F)g](\nu_S)+\tfrac{1}{2}\<h,\Ric_g+\d^* V_g\>g(\nu_S)-
\tfrac{1}{2}\tr(\Ric_g+\d^* V_g)h(\nu_S)
\end{split}\ees
and thus
\bes\begin{split}
&[\d^*V-\tfrac{1}{2}({\rm div}V)g]'_h(\nu_S)\\
&=-[(\Ric-\tfrac{1}{2}Rg)(\nu_S)]'_h+(\Ric-\tfrac{1}{2}Rg)((\nu_S)'_h)\\
&+[F-\tfrac{1}{2}(\tr F)g](\nu_S)+\tfrac{1}{2}\<h,\Ric_g+\d^* V_g\>g(\nu_S)-
\tfrac{1}{2}\tr(\Ric_g+\d^* V_g)h(\nu_S)
\end{split}\ees
on $S$.
By the constraint equations \eqref{Gauss}-\eqref{GC}, $[(\Ric-\tfrac{1}{2}Rg)(\nu_S)]'_h$ is given by 
\bes
\big(-\tfrac{1}{2}[|K_S|^2-(\tr_{g_S}K_S)^2+R_S]'_{(h_S,K'_h)_S},~[{\rm div}_{g_S}K_S-d_S(\tr_{g_S}K_S)]'_{(h_S,K'_h)_S}\big).
\ees
Thus the target data in $T(\cT^H)$ uniquely determine the vector field 
\bes
[\d^*V'_h](\nu_S)-\tfrac{1}{2}({\rm div}V'_h)g(\nu_S),
\ees
along $S$. Since the initial data of $V'_h$ is already determined, this uniquely determines the vector field 
\be
\nabla_{\nu_S} V'_h \ \ {\rm along} \ \ S.
\ee

\end{proof}

\begin{remark} \label{gaugerem} 
{\bf (i).}
{\rm Lemma \ref{Gauge-lemma} implies the standard result that solutions of the gauge-reduced Einstein equations 
$$\Ric_g + \d^*V = 0,$$
with target data $ V = 0$ at $S\cup \cC$ and with $g$ satisfying the vacuum constraint equations \eqref{Gauss}-\eqref{GC} along $S$ necessarily satisfy 
$V = 0$ on $M$, and so are solutions of the Einstein equations 
$$\Ric_g = 0$$
on $M$. The same result holds locally, in the context of the localization in \S 2.2, as well as for the linearized equations \eqref{V'}. 

  Conversely, given any vacuum Einstein metric $g$, there is a unique diffeomorphism $\f \in \Diff_1(M)$ such that $\w g = \f^*g$ is in 
harmonic gauge, $V_{\w g} = 0$. Recall that $\Diff_1(M) \subset \Diff_0(M)$ is the group of smooth diffeomorphisms equal to the identity to 
first order on $S$. 
 
{\bf (ii).} The initial data $g_{\a\b}$, $\p_t g_{\a\b}$ for $g_{\a\b}$ along $S$ are determined by the initial data $g_S$, $K$, $\nu_S$ and $V|_S$ 
along $S$. To see this, observe that the determination of the corner angle $\a$ along $\Si$ (cf.~Lemma \ref{corner}) determines a choice of adapted 
local coordinates, up to the natural equivalence relation, cf.~\S 2.2. In such local coordinates, $g_S$ determines $g_{ij}$, $i, j = 1,\dots, n$. Also, since 
$\nu_S$ and the coordinate $t$ are given, the lapse-shift $(\mu, X)$ with $\nu_S = \mu \p_t + X$ are given, and so the components $g_{0\a}$ are also determined. 
Similarly, $K_S$ then determines $\p_t g_{ij}$. Finally, it is easy to see that from these prior determinations, $(V_S)^{\a} = \Box_g x^{\a}|_S$ determines 
$\p_t g_{0\a}$ on $S$. The same result holds for linearizations $h$ with the linearized target data. 

{\bf (iii).} For later use, note that the gauge variation $V'_h$ satisfies the boundary stable energy estimate \eqref{DirE}, since the Dirichlet boundary 
value $V'_h |_{\cC}$ is part of the target data in $T(\cT^H)$. 

}
\end{remark}

 \section{Linear Analysis I}
  
  In this section, we prove apriori energy estimates and the existence of $C^{\infty}$ solutions to the linearized problem \eqref{PhiH} for arbitrarily prescribed 
smooth target data in $T(\cT^H)$, in the localized setting of \S 2.2. The global version of the results here are proved in \S 4. 

  Throughout this section, the local rescaled metric $\w g$ will be denoted by $g$. Also, since most of the analysis concerns the boundary behavior, 
the unit normal $\nu_{\cC}$ to $\cC$ is generally denoted simply by $\nu$. It is  extended into bulk region $\w U$ as 
$$\nu_{\cC} = \frac{1}{|\nabla x^1|}\nabla x^1 =  \frac{1}{|\nabla x^1|}g^{\a1}\p_{\a}.$$
Clearly, $\nu = \nu_{\cC}$ is close to its Minkowski corner value, given by \eqref{normals}. 

\subsection{Boundary data equations.} 

  The boundary data in the target space $\cT^H$ consists of the data $([g_{\cC}], \mu_g, V_{\cC})$. The linearizations of these terms 
are given by 
\be \label{confb}
[g_{\cC}]'_h = \ring{h} = h^{\tT} - \frac{\tr_{\cC}h^{\tT}}{n}g_{\cC},
\ee
\be \label{mub}
(\mu_g)'_h = \tfrac{1}{2}(h_{\nu \nu} + \tfrac{n-2}{n}\tr_{\cC}h^{\tT})\mu_g,
\ee
\be \label{gauget}
(V'_h)^{\tT} = - \nabla_{\nu}h(\nu)^{\tT} + \d_{\cC}\ring{h}+ \tfrac{1}{2}d (h_{\nu \nu} + \tfrac{n-2}{n}u) - (A + H\g)h(\nu)^{\tT} - (\<D^2 x^{\a}, h\>\p_{\a})^{\tT},
\ee
\be \label{gaugen}
(V'_h)(\nu) = -\tfrac{1}{2}\nu(h_{\nu \nu}) + \tfrac{1}{2}\nu(\tr_{\cC}h^{\tT}) + \d_{\cC}(h(\nu)^{\tT}) - h_{\nu \nu}H + \<A, h^{\tT}\> - \<D^2 x^{\a}, h\>g_{\nu \a}. 
\ee 
Here the superscript $\tT$ denotes the component tangent to the boundary $\cC$. The derivations of \eqref{confb} and \eqref{mub} are elementary while 
the equations \eqref{gauget}-\eqref{gaugen} follow from the linearization of $V$ in \eqref{V'}. Namely, as noted in \cite{I}, straightforward 
computation from \eqref{V'h} gives 
$$(V'_h)^{\tT} = - \nabla_{\nu}h(\nu)^{\tT} + \b_{\cC}h{^\tT}+ \tfrac{1}{2}d h_{\nu \nu} - (A + H\g)h(\nu)^{\tT} - (\<D^2 x^{\a}, h\>\p_{\a})^{\tT},$$
Write $\b_{\cC}h^{\tT} = \d_{\cC}(\ring{h} + \frac{\tr_{\cC}h^{\tT}}{n}g_{\cC}) + \frac{1}{2}d\tr_{\cC}h^{\tT} = \d_{\cC}\ring{h} + \frac{n-2}{2n}d u$. 
Substituting this in the equation above gives \eqref{gauget}. The reason for the choice of the boundary term $\mu_g$ is exactly the appearance 
of $(\mu_g)'_h$ in \eqref{gauget}. Similar computation gives \eqref{gaugen}. 

For later purposes, we rewrite \eqref{gauget} as
\be \label{h_nuT} 
\nabla_{\nu}h(\nu)^{\tT} = Y(\cT) + E(h),
\ee
 where $Y(\cT) = -(V'_h)^{\tT} + \d_{\cC} \ring{h} + \frac{1}{2}d_{\cC}(h_{\nu\nu} + \frac{n-2}{n}\tr_{\cC}h^{\tT})$ is fully determined by the target data and 
 $E(h) = -(A + H\g)h(\nu)^{\tT} - (\<D^2x^{\a}, h\>\p_{\a})^{\tT}$ is regarded later as a small error term. 
 
 \medskip 
 
The target data $(\ring{h}, \mu'_h) \in T(\cT^H)$ represent Dirichlet boundary data for $h$ on $\cC$, while the terms $V'_h$ represent (to leading order) 
Neumann boundary data, together with given target data. Note that in these equations, there is no simple equation for the variation of the 
conformal factor 
$$u = \tr_{\cC}h^{\tT}.$$
In fact the next result shows that the linearization of the Hamiltonian constraint \eqref{Gauss} gives such an equation. 

\begin{lemma}\label{u}
The function $u$ satisfies the wave equation 
\be \label{uC}
-\tfrac{n-1}{n}\Box_\cC u-\tfrac{1}{n}R_\cC u = \w Y(\cT, \d^* V'_h) + \w E(h),
\ee
along the boundary $\cC$, where 
\be \label{uinhom}
\begin{split}
\w Y(\cT, \d^*V'_h)&=-R'_{\ring{h}} +\tr_g (F - \d^*V'_h) - 2 (F - \d^*V'_h)(\nu,\nu),\\
\w E(h)&= -2\<A'_h, A\> + 2H_{\cC}H'_h + 2\<A\circ A ,h\> - \tr_g (\d^*)'_h V + 2(\d^*)'_h V(\nu,\nu)\\
&\quad-4\Ric(\nu,\nu'_h) - \<\Ric, h\>.
\end{split}
\ee
Moreover, the initial data $u$ and $\p_t u$ for $u$ at the initial surface $\Si$ are determined by the initial and boundary data 
in $T(\cT^H)$. In particular, $\p_t u$ is determined at $\Si$ by the linearization of \eqref{velocity}. 

\end{lemma}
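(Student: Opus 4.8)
The strategy is to read \eqref{uC} as the linearization, in the direction $h$, of the Hamiltonian (Gauss) constraint for the \emph{timelike} hypersurface $\cC\subset(M,g)$, in direct analogy with the role of \eqref{Gauss} on $S$: of the boundary components of $h$, the conformal factor $u=\tr_{\cC}h^{\tT}$ is precisely the one \emph{not} prescribed by the target data $(\ring h,(\mu_g)'_h,V'_h)\in T(\cT^H)$, and the Gauss constraint along $\cC$ is the relation that governs it. The decomposition \eqref{FW} of $V'_h$ into the target-determined part $V'_F$ and the homogeneous part $W'_h$ must be in place before the derivation, since it dictates the split of the inhomogeneity into $Y(\cT)$ and $E(h)$.

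I would begin with the twice-contracted Gauss equation for $\cC$, with spacelike unit normal $\nu=\nu_{\cC}$ (so $\<\nu,\nu\>=1$),
\begin{equation*}
R_{\cC}=R_g-2\Ric_g(\nu,\nu)+H_{\cC}^2-|A|^2,
\end{equation*}
an identity valid for every metric, and linearize both sides at the background $g$. On the left the boundary metric varies by $h^{\tT}$, and the standard linearized scalar curvature formula gives $R'_{\cC}(h^{\tT})=-\Box_{\cC}(\tr_{\cC}h^{\tT})+\d_{\cC}\d_{\cC}h^{\tT}-\<h^{\tT},\Ric_{\cC}\>$. Substituting the conformal/tracefree decomposition $h^{\tT}=\ring h+\tfrac1n u\,g_{\cC}$ of \eqref{confb} (with $\dim\cC=n$) and collecting the pure–trace contributions, the left side becomes $-\tfrac{n-1}{n}\Box_{\cC}u-\tfrac1n R_{\cC}u+R'_{\ring h}$, where $R'_{\ring h}:=\d_{\cC}\d_{\cC}\ring h-\<\ring h,\Ric_{\cC}\>$ depends only on the Dirichlet datum $\ring h$ and is moved to the right-hand side. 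This already produces the principal operator in \eqref{uC}.

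For the right side I would use the gauge–reduced linearized equation $L(h)=\Ric'_h+\d^*V'_h+(\d^*)'_hV=F$ from \eqref{L} to write $\Ric'_h=F-\d^*V'_h-(\d^*)'_hV$, and substitute this into $(R_g)'=\tr_g\Ric'_h-\<h,\Ric_g\>$ and into $(\Ric_g(\nu,\nu))'=\Ric'_h(\nu,\nu)+2\Ric_g(\nu,\nu'_h)$, where $\nu'_h$ (the variation of the unit normal, algebraic and first order in $h$, with a tangential part) accounts for the $g$–dependence of $\nu$. The remaining terms linearize as $(H_{\cC}^2)'=2H_{\cC}H'_h$ and $(|A|^2)'=2\<A'_h,A\>-2\<A\circ A,h\>$, involving $h$ and at most one normal derivative of $h$ through $A'_h$. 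Inserting finally $V'_h=V'_F+W'_h$ from \eqref{FW}: the pieces built from $F$ and $V'_F$ are, by \eqref{V'F}, determined by the target data and collect together with $-R'_{\ring h}$ into $Y(\cT)$, while the pieces built from $W'_h$, from $(\d^*)'_hV$, from $\nu'_h$ and from $h$ collect into $E(h)$, reproducing \eqref{uinhom}. For the initial data, $u|_{\Si}=\tr_{\cC}h^{\tT}|_{\Si}$ is fixed by $\ring h$, by $g_{\cC}|_{\Si}$ and by the variation $\f'_h|_{\Si}$ of the conformal factor — all determined by $T(\cT^H)$, this being the linearization of the fact (\S 2.4, Remark \ref{vel}) that $g_{\mu\nu}|_{\Si}$, and in particular $\f|_{\Si}$, is determined by the initial data — while $\p_t u|_{\Si}$, equivalently $T(u)|_{\Si}$, is obtained by linearizing the velocity identity \eqref{velocity}.

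The main obstacle is the bookkeeping rather than any single computation: one must confirm that, apart from the single genuine second tangential derivative $-\tfrac{n-1}{n}\Box_{\cC}u$ coming out of $R'_{\cC}(h^{\tT})$, every remaining term — hidden inside $H'_h$, $A'_h$, $\d^*V'_h$, $(\d^*)'_hV$ and the curvature corrections — involves at most first derivatives of $h$ and of $W'_h$ along $\cC$ and \emph{no} second tangential derivative of $u$, so that \eqref{uC} really is a wave equation for $u$ on $\cC$ with the stated principal symbol; and one must allocate correctly which pieces are target–data–determined (into $Y(\cT)$) and which are treated as errors (into $E(h)$). Full details, with all sign and contraction conventions, are as in \cite{I}.
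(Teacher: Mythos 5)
Your proposal follows exactly the route the paper indicates: linearizing the twice-contracted Gauss (Hamiltonian) constraint along the timelike boundary $\cC$, inserting the decomposition $h^{\tT} = \ring{h} + \tfrac{u}{n}g_{\cC}$ to extract the principal part $-\tfrac{n-1}{n}\Box_{\cC}u - \tfrac1n R_{\cC}u$, substituting $\Ric'_h = F - \d^*V'_h - (\d^*)'_hV$ with the split $V'_h = V'_F + W'_h$ to sort terms into $Y(\cT)$ and $E(h)$, and obtaining the initial data for $u$ from the corner determination of the conformal factor and the linearization of \eqref{velocity} as in Remark \ref{vel} — which is precisely the derivation the paper attributes to \cite{I}. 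The bookkeeping you describe reproduces \eqref{uinhom} term by term, so the proposal is correct and essentially identical in approach.
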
 

\begin{proof} 
The linearization of the Gauss equation \eqref{Gauss} for the hypersurface $\cC$ gives:
\bes
2\<A'_h, A\>{ -2\<A\circ A, h\>}-2H_\cC H'_h + R'_{h^{\tT}}=R'_h-2\Ric'_h(\nu,\nu)-4\Ric(\nu,\nu'_h)
\ees
Note that 
\bes
R'_{h^{\tT}}=R'_{\tfrac{1}{n}ug_{\cC}}+R'_{\ring{h}}=-\Box_\cC u+\d_\cC\d_\cC(\tfrac{1}{n}ug_\cC)-\<\Ric_\cC,\tfrac{1}{n}ug_\cC\> + R'_{\ring{h}}=
-\tfrac{n-1}{n}\Box_\cC u-\tfrac{1}{n}R_\cC u+R'_{\ring{h}}
\ees
So we obtain:
\bes
-\tfrac{n-1}{n}\Box_\cC u-\tfrac{1}{n}R_\cC u=-R'_{\ring{h}}-2\<A'_h, A\>{ +2\<A\circ A,h\>}+2H_\cC H'_h+R'_h-2\Ric'_h(\nu,\nu)-4\Ric(\nu,\nu'_h). 
\ees
Also $R'_h=\tr \Ric'_h-\<h,\Ric\>$ and $\Ric'_h=F-\d^* V'_h-(\d^*)'_h V$. Thus
\bes\begin{split}
&-\tfrac{n-1}{n}\Box_\cC u-\tfrac{1}{n}R_\cC u\\
&=-R'_{\ring{h}} + \tr F -2F(\nu,\nu)\\
&\quad-2\<A'_h, A\> + 2H_{\cC}H'_h +2\<A\circ A,h\> - \tr [\d^*V'_h+(\d^*)'_h V] + 2[\d^*V'_h + (\d^*)'_h V](\nu,\nu) \\
&\quad - 4\Ric(\nu,\nu'_h) - \<h, \Ric\>,\\
\end{split}\ees
which proves the first statement. 

From the discussion in \S 2, cf.~Remark \ref{vel}, note $u$ and $\p_t u$ at $\Si$ are determined by the target data 
$h_S, K'_h, \ring{h}$ in $T(\cT) \subset T(\cT^H)$. 

\end{proof}

\subsection{Energy estimates}

   In this subsection, we derive (apriori) energy estimates for smooth solutions of the linear system $L(h) = F$ with linearized boundary 
data $(\ring{h}, \mu_h')$, in the local setting of \S 2.2. Using this, energy estimates in the global setting are then 
given in \S 4. 

 Let 
\be \label{tau'1}
\tau' = (F, (\g_S', \k', \nu', V_S'), (\ring{\g_{\cC}'}, \mu', V_{\cC}')) 
\ee
be a general element in $T_{\tau}(\cT^H)$, $\tau = \Phi^H(g)$. Define a Sobolev $H^s$ norm on the target space data in $T_{\tau}(\cT^H)$ as follows: 
for $\tau'$ as in \eqref{tau'1},  
\be \label{tnorm}
\begin{split}
||\tau'||_{H^s(\cT^H)} = [ ||F||_{H^s(\w U)} & + [ ||\ring{\g_{\cC}'}||_{H^{s+1}(\cC)} + ||\mu'||_{H^{s+1}(\cC)} + ||V'_{\cC}||_{H^s(\cC)}] \\
& + [ ||\g'_S||_{H^{s+1}(S_0)} + ||\k'||_{H^{s}(S_0)} + ||\nu'||_{H^s(S_0)} + ||V'_S||_{H^s(S_0)} ].
\end{split}
\ee
 On the right in \eqref{tnorm}, the top line consists of bulk and boundary terms, while the bottom line consists of initial data terms. Note the shift in the 
derivative index $s$ on the target compared with the discussion in \eqref{Phi} or \eqref{PhiH}. Of course $S_0$ and $\cC$ are the corresponding 
local domains in $\w U$. As above, the local rescaled quantities $\w g$, $\w h$ are denoted by $g$, $h$. 
 
  The main result of this section is the following energy estimate. As noted in \S 2.2, the result is invariant under rescaling by $\l$ as in \eqref{rescale0}. 

\begin{theorem} \label{Eestthm} 
 For the linearization $D\Phi_g^H$ of $\Phi^H$ in \eqref{PhiH} with $C^{\infty}$ data localized on $\w U \supset U$ 
as in \S 2.2, one has the tame energy estimate 
\be \label{Eest}
||h||_{\cN^s(\w U)}  \leq C||D\Phi_g(h)||_{H^s( \cT^H)},
\ee
where $C$ is a constant depending smoothly only on $(\w U, g)$. In particular, 
$$Ker D\Phi_g^H = 0.$$
\end{theorem}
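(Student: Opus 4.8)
The plan is to establish the energy estimate \eqref{Eest} by the same "decompose-and-couple" strategy used to build the linearized solutions in Theorem~\ref{modexist}, but now run in the reverse direction: instead of constructing $h$ component-by-component, I estimate the components $h^{\tT}$, $h_{\nu\nu}$, $h(\nu)^{\tT}$ of a given solution $h$ in terms of the target norm $\|\tau'\|_{H^s(\cT^H)}$, absorbing the coupling (the $P$-terms in the bulk and the $E$-, $\w E$-terms on $\cC$) by smallness. First I would record that, by Lemma~\ref{gauge} and the decomposition \eqref{FW}, the gauge field $V'_h$ splits as $V'_F + W'_h$ with $V'_F$ controlled by the target data via the boundary-stable estimate \eqref{DirE} (since $V'_F|_{\cC}$ is target data, cf.~Remark~\ref{gaugerem}(iii)), while $W'_h$ solves \eqref{W'h} with zero initial/boundary data and right-hand side $\b'_h\Ric + O_{2,1}(V,h)$ that is $O(\e)$ in the localized setting; hence $\|W'_h\|_{\bar\cH^s(S_t)} \le C\e\,\|h\|_{\bar\cH^s(S_t)}$ plus lower-order target terms. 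This converts all gauge-field quantities appearing on the right of the component equations into (a) target data and (b) an $O(\e)$ multiple of the $h$-norm we are trying to bound.

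Next I would run the three component estimates in the order dictated by the construction. For $u = \tr_{\cC}h^{\tT}$, apply the Dirichlet energy estimate on $\cC$ (Lemma~\ref{Dir222}, specialized to the wave operator on $\cC$ in \eqref{uC}) to get $\|u\|_{\bar H^s(\cC_t)}$ bounded by the target data $Y(\cT)$, the initial data for $u$ at $\Si$ (target-controlled by \S2.4 and Remark~\ref{vel}), and $\|E(h)\|_{H^{s-1}(\cC_t)}$; since $E(h) = O(\e)$ this last term is $\le C\e\|h\|_{\cN^s(S_t)}$ after using the trace theorem $\|h\|_{H^s(\cC_t)} \lesssim \|h\|_{\bar H^{s}(S_t)}$ on the $\cC$-norms and the $W'_h$ bound above. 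This gives the Dirichlet boundary value of $h^{\tT}$ (namely $\mathring h$ is target data and the trace $u/n \cdot g_{\cC}$ closes it), so the boundary-stable estimate \eqref{DirE} applied to $-\tfrac12\Box_g h^{\tT} = F^{\tT} - P^{\tT}(h)$ yields $\|h^{\tT}\|_{\cN^s(S_t)} \le C\|\tau'\|_{H^s(\cT^H)} + C\e\|h\|_{\cN^s(S_t)}$. The same scheme handles $h_{\nu\nu}$: its Dirichlet trace on $\cC$ is $2\mu'_h/\mu_g - \tfrac{n-2}{n}u$, all controlled, so \eqref{DirE} applied to $-\tfrac12\Box_g h_{\nu\nu} = F_{\nu\nu} - P_{\nu\nu}(h)$ gives the analogous bound.

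The delicate component is $h(\nu)^{\tT}$, for which \eqref{h_nuT} furnishes only \emph{Neumann} data $\nabla_\nu h(\nu)^{\tT} = \w Y(\cT) + \w E(h)$ — and, as Remark~\ref{Neumann} warns, the Neumann problem loses a fractional derivative, so \eqref{NeuE} is the only available estimate and it is \emph{not} boundary-stable. I expect this to be the main obstacle, and the way around it is exactly the one this series uses: one does not need a boundary-stable estimate for $h(\nu)^{\tT}$ separately, because the energy identity for the full system $L(h) = F$ — i.e.~multiplying \eqref{L4} by $\p_t h$ and integrating by parts over $M_t$ — produces a boundary integral over $\cC_t$ whose dangerous terms are precisely the products of $h(\nu)^{\tT}$ with its normal derivative, and \eqref{h_nuT} together with the already-controlled $\mathring h$, $h_{\nu\nu}$, $u$ expresses that normal derivative in terms of tangential derivatives of the controlled quantities plus $O(\e)$ errors. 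Thus the boundary term is either a time-derivative of a controlled quantity (absorbed into the left side as a flux) or bounded by $\|\tau'\|_{H^s(\cT^H)}^2 + \e\|h\|_{\cN^s(S_t)}^2$. Combining the bulk energy identity with the three component estimates and choosing $\e_0$ small enough that the $C\e$ terms are absorbed into the left-hand side, a Gronwall argument in $t$ closes the estimate and gives \eqref{Eest}. Finally, $\Ker D\Phi_g^H = 0$ is immediate: if $D\Phi_g^H(h) = 0$ then $\tau' = 0$, so $\|h\|_{\cN^s(S_t)} = 0$ for all $t$, hence $h \equiv 0$.
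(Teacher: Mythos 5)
Your overall architecture (component decomposition $h^{\tT}, h_{\nu\nu}, h(\nu)^{\tT}$, Dirichlet boundary-stable estimates \eqref{DirE} for $u$, $h^{\tT}$, $h_{\nu\nu}$ and the gauge field, coupling terms treated as $O(\e)$ inhomogeneities and absorbed at the end) matches the paper's proof. But you diverge from the paper at exactly the step the paper identifies as the key one, and your replacement has a genuine gap. The paper does \emph{not} avoid the Neumann estimate for $h(\nu)^{\tT}$: it applies \eqref{NeuE} directly, and the point is that \eqref{NeuE} \emph{is} boundary stable --- its left side is the full norm $\bar\cH^s(S_t)$, which includes $\bar H^s(\cC_t)$; the ``loss'' is only in the derivative count of the Neumann data ($H^{s-1/3}$ instead of $H^{s-1}$). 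That loss is harmless here because, via \eqref{h_nuT}, the Neumann data consists of $\ring{h}$ and $\mu'_h$, which the target norm \eqref{tnorm} measures with one extra derivative ($H^{s+1}(\cC)$), of $V'_{\cC}$ at $H^s$, and of $O(\e)$ error terms; this yields \eqref{ee5}, which the paper explicitly calls the key ingredient distinguishing this argument from \cite{I,II}. You misread Remark \ref{Neumann} as saying no boundary-stable estimate exists for Neumann data, and so discarded the tool that closes the argument.

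The substitute you propose --- the energy identity for $L(h)=F$ obtained by pairing with $\p_t h$ and handling the boundary integral with \eqref{h_nuT} --- does not close. First, that identity controls only the interior energy $\bar H^s(S_t)$, not the boundary trace norm $\bar H^s(\cC_t)$ of $h(\nu)^{\tT}$; but the claimed estimate \eqref{Eest} is in the boundary-stable norm $\cN^s$, and, more importantly, the $\e$-error terms in the $u$-equation \eqref{ee1} and in the gauge estimate involve \emph{all} derivatives of $h$ restricted to $\cC$, including normal ones, so absorbing them requires exactly the $\bar H^s(\cC_t)$ control of $h(\nu)^{\tT}$ that your identity does not provide --- the absorption step is circular. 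Second, the dangerous boundary term $\int_{\cC_t}\p^{s-1}\bigl(\w Y+\w E(h)\bigr)\,\p_t\p^{s-1}h(\nu)^{\tT}$ cannot simply be written as a flux or bounded by $\e\|h\|_{\cN^s}^2$: integrating by parts in $t$ moves a derivative onto $\w E(h)$, producing terms of size $\e\|h\|_{\bar H^s(\cC_t)}\|h\|_{H^{s-1}(\cC_t)}$, again requiring top-order boundary trace control that is unavailable without a Tataru-type estimate. So as written the $h(\nu)^{\tT}$ step fails; replacing it with the application of \eqref{NeuE} to the Neumann data \eqref{gauget}/\eqref{h_nuT}, as in \eqref{ee5}, repairs the proof. (Minor further omissions: the bound on $\p_t V'_h|_{S_0}$ requires the linearized constraint equations \eqref{VGauss}--\eqref{VGC}, and the final upgrade from $L^\infty(I,\bar\cH^s)$ to $C^0(I,\bar\cH^s)$, i.e.\ from \eqref{Etot} to \eqref{Etot2}, should be noted.)
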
 

 \begin{proof} 
 
  Throughout the following $h$ is a symmetric bilinear form, written in adapted local coordinates $x^{\a} = (t, x^i)$ as 
$$h = h_{\a\b}dx^{\a}dx^{\b}.$$
The full variation $h$ is decomposed into a collection of terms as follows. On the boundary $\cC$, $h^{\tT}$ has the form 
$$h^{\tT} = h_{ij}dx^i dx^j,$$
where $i, j = 0, 2, \dots, n$. Then $h^{\tT}$ is extended into the bulk $\w U$ to have the same form, i.e.~without any $dx^1\cdot dx^{\a}$ 
term. Similarly, $h(\nu, \cdot)$ is defined on $\w U$ by 
\be \label{hnudef}
h(\nu, \cdot) = \frac{1}{|\nabla x^1|}g^{1 \a}h(\p_{\a}, \cdot),
\ee
which is defined by the background metric $g$ on $\w U$. 

   We will work with separate equations for each of the terms 
\be \label{comp}
h^{\tT}, \ h_{\nu \nu}, \ h(\nu)^{\tT},
\ee
comprising $h$. The bulk equation \eqref{L4} then becomes the following collection of equations: 
\be \label{e1}
-\tfrac{1}{2}\Box_g h^{\tT}  + P^{\tT}(h) = F^{\tT},
\ee
\be \label{e2}
-\tfrac{1}{2}\Box_g h_{\nu \nu}  + P_{\nu \nu}(h) = F_{\nu \nu},
\ee
\be \label{e3}
-\tfrac{1}{2}\Box_g h(\nu)^{\tT}  + P(\nu)^{\tT}(h) = F(\nu)^{\tT},
\ee

  By Lemma \ref{Dir222}, solutions of these equations satisfy the boundary stable energy estimate \eqref{DirE} for Dirichlet boundary data. 
Here the coupling term $P(h)$ is moved to the right side of each equation and treated as an inhomogeneous bulk term. We assume 
that $g$ and all target data are $C^{\infty}$ and the target data have compact support in $\w U$, away from $\cC \cap U$, cf.~\S 2.2. 
  Let $\tau' = D\Phi^H_g(h)$.
  
  Recall that $h^{\tT} = \ring{h} + \frac{u}{n}g_{\cC}$ and that the Dirichlet boundary data for $\ring{h}$ is given target data. 
Next the evolution or wave equation \eqref{uC} for $u$ along $\cC$ gives the following energy estimate for $u$ along $\cC$:
\be \label{ee1}
\begin{split}
||u||_{\w H^s(\Si_t)} \leq  C & [ ||F||_{H^{s-1}(\cC_t)} + ||\ring{h}||_{H^{s+1}({\cC_t)}} + ||V'_h||_{\bar H^s(\cC_t)} + \e ||h||_{\bar H^s(\cC_t)}] \\
+ & \ C[||u||_{H^s(\Si_0)} + ||\p_t u||_{H^{s-1}(\Si_0)}].
\end{split} 
\ee 
Here $\w H^s(\Si_t)$ denotes the Sobolev norm with both $t$-derivatives as well as derivatives tangent to $\Si_t$, but no derivatives normal to $\cC$. 
The extra derivative required on $\ring{h}$ comes from the $(R_{\cC})'_{\ring{h}}$ term in the $\w Y(\cT)$ term in \eqref{uinhom}. The term $\w E(h)$ in 
\eqref{uinhom} is of the form $\e(\p h)$, giving rise to the $\e$ term in \eqref{ee1}. Note all derivatives, including normal derivatives along $\cC$ are 
included in the $\e$-term in \eqref{ee1}.

  The last line in \eqref{ee1} consists of initial data and is bounded by 
$$||u||_{H^s(\Si_0)} + ||\p_t u||_{H^{s-1}(\Si_0)} \leq C[||\g'_S||_{H^{s+1/2}(S_0)} + ||\k'||_{H^{s-1/2}(S_0)}],$$
where we have used the Sobolev trace theorem and Remark \ref{gaugerem} and Remark \ref{vel}. Similarly, 
$$||F||_{H^{s-1}(\cC)} \leq C ||F||_{H^{s-1/2}(\w U)}.$$ 

   Next from the gauge wave equation \eqref{V'} and the boundary stable energy estimate with Dirichlet boundary data \eqref{DirE} we obtain 
\be \label{VbarV}
\begin{split}
||V'_h||_{\bar H^s(\cC_t)} \leq C&[ ||F||_{H^s(\w U)} + ||V'_{\cC}||_{H^s(\cC_t)} + \e||h||_{H^s(\w U)} ]\\
+ &C[||V'_h||_{H^s(S_0)} +  ||\p_t V'_h||_{H^{s-1}(S_0)} ]. 
\end{split}
\ee
Note that this estimate requires an extra derivative on the bulk $F$ term, due to the $\b F$ term in \eqref{V'}. Also, while $V'_h |_{S_0} = V'_S$ 
is given target data, the term $\p_t V'_h | _{S_0}$ is not. To obtain a bound on this term, we use the linearized constraint equations \eqref{VGauss}-\eqref{VGC}; 
this gives 
$$||\p_t V'_h||_{H^{s-1}(S_0)} \leq C[||h_S||_{H^{s+1}(S_0)} + ||K'_h||_{H^s(S_0)} +||F||_{H^{s-1}(S_0)} + ||V'_h||_{H^s(S_0)} ] \leq C||\tau'||_{H^s(\cT^H)}.$$

Given this control on the Dirichlet boundary value of $h^{\tT}$, \eqref{e1} and \eqref{DirE} then gives the estimate 
\be \label{ee2}
\begin{split}
||h^{\tT}||_{\bar \cH^s(S_t)} & \leq  C[||h^{\tT}||_{H^s(\cC_t)} + \e ||h||_{H^s(\w U_t)} + ||\tau'||_{H^s(\cT^H)}] \\
& \leq C[\int_0^t [ ||\ring{h}||_{H^{s+1}({\cC_{t'})}} +||V'_{\cC}||_{H^s(\cC_{t'})}.+ \e ||h||_{\bar H^s(\cC_{t'})} ]dt' +  \e ||h||_{H^s(\w U_t)} + ||\tau'||_{H^s(\cT^H)}], \\
\end{split}
\ee
where the last inequality comes from integration of \eqref{ee1}-\eqref{VbarV} over $t' \in [0,t]$. 
Taking the maximum over $t \in I$ then gives 
\be \label{ee2.1}
\max_{t\in [0,1]} ||h^{\tT}||_{\bar \cH^s(S_t)} \leq C[ ||\tau'||_{H^s(\cT^H)} + \e ||h||_{\hat H^s(\w U)}],
\ee
where $||h||_{\hat H^s(\w U)} :=  ||h||_{H^s(\w U)} + ||h||_{\bar H^s(\cC)}$. 

Similarly, the Dirichlet energy estimate for $h_{\nu \nu}$ gives 
$$\max_{t\in [0,1]}||h_{\nu \nu}||_{\bar \cH^s(S_t)} \leq C[ ||h_{\nu \nu}||_{H^s({\cC)}} + \e ||h||_{H^s(\w U)} + ||\tau'||_{H^s(\cT^H)} ], $$
and so using the definition of $\mu'_h$, $h_{\nu\nu} = \mu'_h - \frac{n-2}{n}u$, together with \eqref{ee2.1} gives 
\be \label{ee4}
\max_{t\in [0,1]} ||h_{\nu \nu}||_{\bar \cH^s(S_t)} \leq C[ ||\tau'||_{H^s(\cT^H)} + \e  ||h||_{\hat H^s(\w U)}],
\ee
Finally for $h(\nu)^{\tT}$, from the Neumann boundary estimate \eqref{NeuE}, we have 
$$||h(\nu)^{\tT}||_{\bar \cH^s(S_t)} \leq C[||\nabla_{\nu}h(\nu)^{\tT}||_{H^{s-1/3}({\cC_t)}} + \e ||h||_{H^{s}(\w U_t)} + ||\tau'||_{H^s(\cT^H)}  ]. $$ 
Examination of the form of the Neumann boundary data in \eqref{gauget} gives 
\be \label{ee5}
\begin{split}
||h(\nu)^{\tT}||_{\bar \cH^s(S_t)}  \leq & C[  ||\ring{h}||_{H^{s+2/3}(\cC_t)} + ||\mu'_h||_{H^{s+2/3}(\cC_t)}  \\
& + ||V'_{\cC}||_{H^{s-1/3}(\cC)} + \e(||h||_{H^{s-1/3}(\cC_t)} + ||h||_{H^{s}(\w U_t)}) + ||\tau'||_{H^s(\cT^H)} ].
\end{split}
\ee

  \medskip 
 
  Summing up the estimates \eqref{ee2.1}-\eqref{ee5}, by choosing $\e$ sufficiently small depending on $g$, one may absorb 
the $\e ||h||_{\bar H^s(\w U)}$ terms on the right side of these equations into the total sum on the left, using the fact that 
$|h||_{\bar H^s(\w U)} \leq \max_{t\in [0,1]}|h||_{\bar \cH^s(S_t)}$, giving then 
\be \label{Etot}
\max_{t\in [0,1]} ||h||_{\bar \cH^s(S_t)} \leq  C ||\tau'||_{H^{s}(\cT^H)} .
\ee
The estimate \eqref{ee5} which involves $\mu'$ in a crucial way, is a key ingredient in establishing the bound \eqref{Etot}. 

 The bound \eqref{Etot} gives a bound on 
\be \label{Linf}
h \in L^{\infty}(I, \bar \cH^s(S)).
\ee
However, from the regularity theory for wave equations with Dirichlet boundary data as in Lemma \ref{Dir222}, by performing the computations 
above over arbitrarily small time intervals $I' = [t_0,t_1] \subset I$, it is easy to verify that this can be improved to a bound on 
$$h \in C^0(I, \bar \cH^s(S))$$
so that \eqref{Etot} gives in fact 
$$||h||_{C^0(I, \bar \cH^s(S))} \leq  C ||\tau'||_{H^{s}(\cT^H)}.$$ 
Finally, as is standard, (cf.~also \cite{Sa}), taking $\frac{d}{dt}$ of the main equation $L(h) = F$ and commuting derivatives leads easily in the 
same way to the bound 
\be \label{Etot2}
||h||_{\cN^s(\w U)} \leq  C ||\tau'||_{H^{s}(\cT^H)}. 
\ee
This completes the proof of Theorem \ref{Eestthm}. 

\end{proof} 

\begin{remark} \label{timerescale}
{\rm The energy estimate \eqref{Eest} holds for time (and space) intervals $t^*$ on the order $O(1)$ in the localized coordinates $\w x^{\a}$. The rescaling 
\eqref{rescale0} then shows that \eqref{Eest} holds for unscaled $g$, $h$ on the order of $t^* = O(\l)$ for the unscaled 
metric $g$ and linearizations $h$. Moreover, the constant $C$ changes to $C\l^{-1}$ under such rescaling back to $g$, $h$, due to the Sobolev rescaling 
property \eqref{Sobrescale} of the domain and target norms. The proof above shows that $\l \sim \e$ must be chosen small enough to allow 
for absorption of $\l$ terms in the estimates above. As noted above, the constant $C$ in \eqref{Eest} depends smoothly on $g$, (i.e.~$\w g$). 

}
\end{remark}

 \subsection{Construction of local linearized solutions.} 

In this subsection, we construct solutions $h$ to the linear problem $D\Phi_g^H(h) =  \tau'$ for arbitrary $\tau' \in T(\cT^H)$. 
The main result is the following surjectivity or existence theorem. Again, this result is independent of the scaling \eqref{rescale0}. 
  
\begin{theorem}\label{modexist}
There exists $\e_0 > 0$ such that, for any smooth metric $g$ on $\w U$ which is $C^{\infty}$ $\e$-close, $\e \leq \e_0$, to a standard Minkowski 
corner metric $g_{\a_0}$ as in \eqref{eps}, the linearization $D\Phi_g^H$ at $g$ satisfies: given any $C^{\infty}$ target data 
$\tau' \in T_{\tau}(\cT^H)$ on $\w U$ as in \eqref{tau'1}, there exists a variation $h \in T_g Met(\w U)$, such that 
\be \label{DwPhisol}
D \Phi_g^H(h) = \tau'.
\ee
Thus, the equation 
\be \label{LhF}
L(h) = F,
\ee
has a $C^{\infty}$ solution $h$ such that 
\be \label{IC1}
(g_S, K)'_h = (\g_S', \k'), \ \ (\nu_S)'_h = \nu', \ \ V'_h = V_S' \ \mbox{ on }S,
\ee
and
\be \label{BC1}
(([g_\cC])'_h, \mu'_h) = (\ring{\g'}, \mu'), \ \ V'_h = V_\cC' \ \mbox{ on }\cC.
\ee
Further the constructed solution $h$ depends smoothly on the data $\tau'$. 
\end{theorem}
  
 \begin{proof} 
The proof is via an essentially standard Picard-type iteration process. Thus, we construct a sequence of approximate solutions uniformly bounded in 
$H^s$ norm (or $\cN^s$ norm) for any $s$, which converges to an exact solution of \eqref{LhF}-\eqref{BC1}. As in \S 3.2, we work with the components 
$h^{\tT}, h_{\nu\nu}, h(\nu)^{\tT}$ and the corresponding equations \eqref{e1}-\eqref{e3}. As previously, the notation $\w g$, $\w h$ is simplified to 
$g, h$. 
    
\medskip

0. In the initial step, we solve the relevant equations with the coupling terms set to zero, i.e.~the $P$-terms in the bulk and the $E$-terms in \eqref{uC} and \eqref{h_nuT} 
on $\cC$ are set to zero. Thus, the solution is constructed essentially depending only on the given target data in $T(\cT^H)$. This is done step-by-step 
for each of the terms in \eqref{comp}, 

  To begin, let $V_0'$ be the solution to the modified gauge equation 
\be \label{modV'}
-\tfrac{1}{2}(\Box_g V'_0+ \Ric_g (V'_0)) + \tfrac{1}{2}\nabla_{V} V'_0 = \b F,
\ee
(so the `error' term $\b'_h \Ric + O_{2,1}(V,h)$ is dropped from \eqref{V'}) and with initial and boundary data given by the target data $V'_S$ and $V'_{\cC}$.  
By the energy estimate \eqref{DirE}, we have 
$$||V_0'||_{H^s(\w U)} \leq C||\tau'||_{H^s(\cT^H)}.$$
 Next, let $u_0$ be the solution to the modified boundary equation \eqref{uC}, i.e. 
\be \label{modu}
-\tfrac{n-1}{n}\Box_\cC u_0-\tfrac{1}{n}R_\cC u_0 = -R'_{\ring{\g_{\cC}'}} + \tr_g(F - \d^*V_0') - 2(F - \d^*V_0')(\nu,\nu),
\ee
with the initial data for $u_0$ determined by the target data $\tau'$. Again by the energy estimate \eqref{DirE}, one has 
$$||u_0||_{H^s(\cC)} \leq C||\tau'||_{H^s(\cT^H)}.$$
Using this boundary value for $u_0$, we then solve the modified bulk equation for $h^{\tT}$, i.e. define $h_0^{\tT}$ to be the solution to 
$$-\tfrac{1}{2}\Box_g h_0^{\tT}   = F^{\tT},$$
on $\w U$, with initial and boundary conditions given by the target data $\tau'$ and $u_0$ determined as above. 
Again as above, this gives  
$$||h_0^{\tT}||_{H^s(\w U)} \leq C||\tau'||_{H^s(\cT^H)}.$$
The target data $\mu'$ then determines the Dirichlet boundary data for $(h_0)_{\nu\nu} = \mu' - \frac{n-2}{n}u_0$ and we solve 
$$-\tfrac{1}{2}\Box_g (h_0)_{\nu \nu}  = F_{\nu \nu},$$
with this Dirichlet boundary value for $h_{\nu \nu}$ and with the given initial conditions from $\tau'$. Again 
$$||(h_0)_{\nu\nu}||_{H^s(\w U)} \leq C||\tau'||_{H^s(\cT^H)}.$$
Finally we use the modified tangential boundary gauge equation \eqref{gauget}, i.e.
\be \label{gauget2}
\nabla_{\nu}h(\nu)^{\tT} = -(V'_h)^{\tT}+ \d_{\cC}\ring{h}+ \tfrac{1}{2}d (h_{\nu \nu} + \tfrac{n-2}{n}u) ,
\ee
and define the Neumann boundary data for $h_0(\nu)^{\tT}$ by 
$$\nabla_{\nu}h_0(\nu)^{\tT} = -(V_0')^{\tT}+ \d_{\cC}\ring{\g_{\cC}'}+ \tfrac{1}{2}d ((h_0)_{\nu \nu} + \tfrac{n-2}{n}u_0).$$
Let then $h_0(\nu)^{\tT}$ be the solution to the equation 
$$-\tfrac{1}{2}\Box_g h_0(\nu)^{\tT}  = F(\nu)^{\tT},$$
in $\w U$ with this given Neumann boundary data and with initial data given by the target data in $T( \cT)$, By the Neumann boundary estimate 
\eqref{NeuE} and the definition of the norm on $\cT^H$, this gives 
$$||h_0(\nu)^{\tT}||_{H^s(\w U)} \leq C||\tau'||_{H^s(\cT^H)}.$$

   Summing up the solutions above gives a solution $h_0$ to 
$$-\tfrac{1}{2}\Box_g h_0 = F,$$
in $\w U$ with the prescribed initial and boundary conditions from $\tau'$. Further  
$$||h_0||_{H^s(\w U)} \leq C||\tau'||_{H^s(\cT^H)}.$$

  The next stages are error, so $O(\e)$ adjustments to $h_0$. The construction of the iterative procedure is similar to the above, but with all target data set to 
$0$ with only small error terms remaining. 

 1.  To begin, solve the modified gauge equation for $V_1'$, (compare with \eqref{modV'} and \eqref{V'}),  
$$-\tfrac{1}{2}(\Box_g V'_1+ \Ric_g (V'_1)) + \tfrac{1}{2}\nabla_{V} V'_1 = \b_{h_0}'\Ric + O_{2,1}(V, h_0)$$
and with zero initial and boundary data. Since the right hand side involves at most first derivatives of $h_0$ with $\e$-coefficients, the Dirichlet 
energy estimate \eqref{DirE} gives  
$$||V_1'||_{H^s(\w U)} \leq \e C||\tau'||_{H^s(\cT^H)}.$$

   Next as in \eqref{modu}, we solve for $u_1$ on $\cC$ by requiring 
$$-\tfrac{n-1}{n}\Box_\cC u_1 -\tfrac{1}{n}R_\cC u_1 = \w E(h_0, V_1'),$$
where by inspection 
\bes
\begin{split}
\w E(h_0, V_1') = & -\tr_g \d^*V'_1 + 2 \d^* V'_1(\nu, \nu) -2\<A'_{h_0}, A\> + 2 H_{\cC}H'_{h_0} + 2\<A\circ A, h_0\> \\ 
& - 4\Ric(\nu, \nu)'_{h_0}) - \<\Ric, h_0\>  - \tr (\d^*)'_{h_0}V + 2 (\d^*)'_{h_0}V](\nu,\nu).
\end{split}
\ees
This term contains at most first order derivatives of $V_1', h_0$ with small $\e$-coefficients and hence $||u_1||_{H^s(\cC)} \leq C\e ||\tau'||_{H^s(\cT^H)}$. 
As in Step 0, let then $h_1^{\tT}$ be the solution in $\w U$ to the equation 
$$-\tfrac{1}{2}\Box_g h_1^{\tT} = - P(h_0),$$
with the Dirichlet boundary value for $u_1$ determined as above and the other target data of $h_1^{\tT}$ set to zero. It follows then as above that 
$$||h_1^{\tT}||_{H^s(\w U)} \leq C\e ||\tau'||_{H^s(\cT^H)}.$$
Similarly, solve 
$$-\tfrac{1}{2}(\Box_g h_1)_{\nu \nu}  = -P(h_0)_{\nu \nu},$$
with zero initial data and with boundary data given by $(h_1)_{\nu \nu} = -\frac{n-2}{n}u_1$ on $\cC$. This gives again 
$$||(h_1)_{\nu \nu}||_{H^s(\w U)} \leq C\e ||\tau'||_{H^s(\cT^H)}.$$
Finally for $h_1(\nu)^{\tT}$, we use the Neumann boundary data 
$$\nabla_{\nu} h_1(\nu)^{\tT} = E(V_1, h_0),$$
where from \eqref{h_nuT}, 
$$E(V_1', h_0) = (V_1')^{\tT} - (A - Hg_{\cC})h_0(\nu)^{\tT} - (\<D^2x^{\a}, h_0)\p_{\a})^{\tT}.$$
This only involves $V_1'$ and $h_0$ to zero order. Since $V_1' = O(\e)$ in $H^s$ and the coefficients of $h_0$ are also $O(\e)$, it follows that 
$\nabla_{\nu} h_1(\nu)^{\tT} = O(\e)$ in $H^s(\cC)$. Solving then the Neumann boundary value problem 
$$-\tfrac{1}{2}\Box_g h_1(\nu)^{\tT} = - P(h_0)(\nu)^{\tT},$$
with this boundary data and with zero initial data then gives 
$$||(h_1(\nu)^{\tT}||_{H^s(\w U)} \leq C\e ||\tau'||_{H^s(\cT^H)}.$$
     
     Summing up the solutions above gives a solution $h_1$ to 
$$-\tfrac{1}{2}\Box_g h_1 = -P(h_0),$$
in $\w U$ with zero target data and 
$$||h_1||_{H^s(\w U)} \leq C \e ||\tau'||_{H^s(\cT^H)}.$$
         
2. Next, we construct solutions $V_2'$, $h_2$ by exactly the same process as above with $h_0$ replaced by $h_1$. 
As before, this gives $V_2'$ and $h_2$ solving 
$$-\tfrac{1}{2}\Box_g h_2  = -P(h_1),$$
in $\w U$, with zero target initial data and zero target boundary data.  It follows as above that  
$$||h_2||_{H^s(\w U)} \leq C \e^2 ||\tau'||_{H^s(\cT^H)}.$$

   Continuing inductively in this way, gives the sequence $\{h_m\}$, $m = 1,2, \cdots$, satisfying  
$$||h_m||_{H^s(\cC)} \leq C \e^m ||\tau'||_{H^s(\cT^H)},$$
and hence the sequence $h^n = \sum_0^n h_m$ converges (weakly) in $H^s(\w U)$ to a limit $h \in H^s(\w U)$. Since $s$ is arbitrary, $h \in C^{\infty}$. 
By construction 
$$-\tfrac{1}{2}\Box_g(h_0 + \sum_1^{\infty} h_i)  = F - P(h_0 + \sum_1^{\infty} h_i),$$
so that 
\be \label{Lbarh}
L(h) = F.
\ee
Hence $h$ solves \eqref{LhF}. The sequence of gauge fields $(V')^n  = V_0' + \sum_1^{n}V_m'$ also converges weakly in $H^s$ to a limit field $V'$ 
satisfying \eqref{V'} and the initial and boundary conditions $V_S', V_{\cC}'$. We claim that in fact 
$$V' = V'_h.$$ 
To see this, note that both $V'$ and $V'_h$ have the same initial values on $S$ and $(V')^{\tT} = (V'_h)^{\tT}$ on $\cC$. Regarding the 
normal component, first observe that by construction, the limit $f = \lim f_i$ satisfies \eqref{uC}, i.e. 
\be \label{Hlin2}
\begin{split} 
\w \Box_{\cC} f + q(\p f) & =  R'_{\ring{h}} - \tr(F - \d^* V') + 2(F - \d^* V')(\nu,\nu)\\
 &+2\<A'_h, A\> - 2H_\cC H'_h - 2\<A\circ A, h\> + 4\Ric(\nu,\nu'_h) + \<\Ric, h\>\\
&+ \tr [(\d^*)'_h V] - 2[(\d^*)'_h V](\nu,\nu), \\
\end{split}
\ee
along $\cC$. Moreover, by \eqref{Lbarh}, $(\Ric + \d^*V)'_h = F$ and so applying the linearization of the Hamiltonian constraint as before shows 
that \eqref{Hlin2} also holds with $V'_h$ in place of $V'$. Hence  
$$\tr(\d^*V')-2\d^* V'(\nu,\nu) = \tr(\d^* V'_h) - 2\d^*  V'_h(\nu,\nu) \mbox{ on }\cC.$$
Since $(V')^{\tT} = (V'_h)^{\tT}$, this equation shows that 
$$\nu \<V', \nu\> = \nu\<V'_h, \nu\> \ \ {\rm at} \ \ \cC.$$
Since both $V'$ and $V'_h$ satisfy the same gauge equation \eqref{V'}, it then follows from uniqueness of solutions to the mixed Dirichelt-Neumann 
boundary value problem that $V' = V'_h$. 

Also, the construction is clearly smooth in the data $T(\w \cT^H)$ and in $g$. This completes the proof.

\end{proof}

 \section{Linear Analysis II}

 In this section, we prove the global-in-space version of the results of \S 3. This is done by patching together local solutions and estimates to obtain 
 global-in-space solutions and estimates. The following result proves the well-posedness of the IBVP for the linearized problem, allowing for loss-of-derivatives.

\begin{theorem}\label{globexist}
Let $(M, g)$ be a smooth globally hyperbolic Lorentzian manifold, $M \simeq I \times S$, with compact Cauchy surface $S$ having compact boundary 
$\p S = \Si$ and with timelike boundary $(\cC, g_{\cC})$. Then for any smooth 
$\tau' \in T_{\tau}(\cT^H)$, the equation 
\be \label{L3}
D\Phi_g^H(h) = \tau'
\ee
has a unique $C^{\infty}$ solution $h$ defined on the domain $[0, t^*)\times S$ where $t^* = O(\l)$, $\l = \l(g)$. The solution $h$ depends 
smoothly on the data $\tau'$. Moreover, any $h$ satisfying \eqref{L3} satisfies the tame energy estimate 
\be \label{Eest2}
||h||_{\cN^s(I^*\times S)}  \leq C\l^{-1}||D\Phi_g^H(h)||_{H^s(T^H)}, 
\ee
where $I^* = [0, t^*)$, cf.~Remark \ref{timerescale} and $C$ depends smoothly only on $g$. 
\end{theorem}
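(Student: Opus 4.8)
The plan is to globalize the local existence result (Theorem \ref{modexist}) and the local energy estimate (Theorem \ref{Eestthm}) by the standard patching procedure, exactly as in \cite{I, II}. First I would address \textbf{existence}. Cover the boundary $\cC$ by finitely many coordinate charts $U_i$, each metrically small after rescaling so that the frozen-coefficient estimate \eqref{eps} holds with $\e \le \e_0$; enlarge each to $\w U_i$ as in \S 2.2 and cover the interior of $S$ by charts away from $\cC$. On each $\w U_i$ apply Theorem \ref{modexist} to produce a local solution $h_i$ with the prescribed localized target data; in the interior charts, where no boundary is present, one solves the pure Cauchy problem for the wave system \eqref{L4} together with the gauge field, which is classical. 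Using a partition of unity $\{\chi_i\}$ subordinate to $\{U_i\}$ and the finite-propagation-speed property noted in \S 2.2, these local pieces are glued: set $h = \sum_i \chi_i h_i$ as an initial approximation, note $D\Phi_g^H(h) - \tau'$ is a commutator term supported in the overlaps and of lower order, and iterate (or invoke a Picard-type argument as in \cite{I}) to remove the error; convergence follows from the energy estimate below. Alternatively, and perhaps more cleanly, one runs the iteration scheme of Theorem \ref{modexist}'s proof directly on $M$, using the global Dirichlet solvability of Lemma \ref{Dir222} (which holds globally, not just locally) and the global gauge-field analysis of Lemma \ref{gauge}; the only role of localization was to make the $P$- and $E$-terms small, but globally one instead controls them by the energy estimate on a short time interval $[0, T]$.

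Next I would prove the \textbf{energy estimate} \eqref{Eest2} and derive \textbf{uniqueness} from it. Given any $C^\infty$ solution $h$ of \eqref{L3}, cover $S$ (hence $\cC$) by the finite collection of rescaled charts above, choose a partition of unity $\{\chi_i\}$, and apply the local estimate \eqref{Etot2} from Theorem \ref{Eestthm} to $\chi_i h$ on $\w U_i$. Since $\chi_i h$ satisfies $L(\chi_i h) = \chi_i F + [\L, \chi_i]h$ where the commutator is first order and supported in $\text{supp}\, d\chi_i$, and since the localized target data of $\chi_i h$ differs from $\chi_i \tau'$ only by such lower-order terms, summing over $i$ gives
\be \label{patchest}
||h||_{\cN^s(S_t)} \le C\big[ ||\tau'||_{H^s(\cT^H)} + ||h||_{H^{s}(M_t)}\big],
\ee
where the last term absorbs the commutator contributions (which lose no derivatives relative to the $\cN^s$ norm on the left but are integrated over $[0,t]$). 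A Gronwall argument in $t$ then absorbs the $||h||_{H^s(M_t)}$ term, yielding \eqref{Eest2} with $C$ depending only on $g$. Uniqueness is immediate: if $\tau' = 0$ then $\eqref{Eest2}$ forces $h = 0$; hence $\Ker D\Phi_g^H = 0$ globally, and together with existence this gives that the constructed solution is the unique one, and that it depends smoothly (indeed linearly) on $\tau'$ since the whole construction is linear in the target data and the estimates are uniform.

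The main obstacle I anticipate is \textbf{bookkeeping the commutator and overlap terms so that no genuine loss of derivatives accumulates beyond the single $2/3$-derivative loss already built into the Neumann estimate \eqref{NeuE}}. The patching must be arranged so that the commutators $[\L, \chi_i]h$ and the cross-terms in $D\Phi_g^H(\chi_i h)$ are controlled by norms of $h$ of order $\le s$ (not $s+1$) over a lower-dimensional or already-estimated region, and so that the Gronwall absorption closes uniformly in $s$ — this is exactly what makes the estimate \emph{tame} in the Nash--Moser sense. A secondary subtlety is the rescaling: the Sobolev norms scale anisotropically (cf.\ the footnote in \S 2.2), so one must check that the constants $C$ from \eqref{Etot2} on the rescaled charts, when translated back, combine into a single $g$-dependent constant independent of the choice of cover; this is routine but must be stated carefully. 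Finally, one verifies $h \in C^0(I, \bar\cH^s(S))$ rather than merely $L^\infty$ by the same short-time-interval argument used at the end of the proof of Theorem \ref{Eestthm}, applied now on $M$.
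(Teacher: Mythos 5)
Your overall strategy (patch the local results of Theorems \ref{modexist} and \ref{Eestthm} by a partition of unity plus finite propagation speed) is the right one, but the way you deploy the partition of unity differs from the paper in a way that creates a genuine gap. The paper applies the cutoffs to the \emph{target data}, not to the solution: one solves $D\Phi^H_g(h_i) = \rho_i\tau'$ locally on each $\w U_i$ (Theorem \ref{modexist}), notes that by finite propagation speed each $h_i$ is compactly supported for a short time and so extends by zero, and then sets $h = \sum_i h_i$; since $D\Phi^H_g$ is linear and $\sum_i\rho_i\tau' = \tau'$, this sum solves the global problem \emph{exactly} --- no commutators, no correction step, no Gronwall. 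Uniqueness and the global estimate \eqref{Eest2} then follow by decomposing an arbitrary solution the same way (local uniqueness from Theorem \ref{Eestthm} identifies its pieces with the $h_i$) and simply summing the local estimates.

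The gap in your version is the absorption of the commutator terms. You cut off the \emph{solution}, so the bulk datum of $\chi_i h$ is $\chi_i F + [L,\chi_i]h$ with $[L,\chi_i]h$ first order in $h$. But the target norm \eqref{tnorm} measures the bulk term in $H^s(\w U)$ (one derivative more than in a standard energy estimate, forced by the $\b(F)$ term in \eqref{V'eqn} and the estimate \eqref{VbarV}), so controlling the commutator contribution to $\|D\Phi^H_g(\chi_i h)\|_{H^s(\cT^H)}$ requires $\|h\|_{H^{s+1}}$, not $\|h\|_{H^s(M_t)}$ as in the bound you propose after summation; a Gronwall argument can absorb a time integral at the same derivative level, but not a genuine loss of one derivative. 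The same defect undermines your existence step: ``iterating away'' the gluing error (or running the iteration of Theorem \ref{modexist} globally on a short time interval) requires the linear solution operator to act boundedly on a fixed $H^s$, whereas here it loses derivatives (the $2/3$ loss of \eqref{NeuE} plus the extra derivative on $F$), and unlike the local iteration there is no small factor $\e$ per step to compensate --- smallness in time does not recover derivatives. This derivative loss is precisely why the paper never corrects a glued approximate solution and instead localizes the data so that linearity makes the patched solution exact.
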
 
 
 \begin{proof}

  Given the previous local results, the proof of Theorem \ref{globexist} is essentially standard. We provide the details below, 
 following the proof of the analogous result in \cite{I}. 
  
  Given $(M, g)$ as in Theorem \ref{globexist}, first choose an open cover $\{U_i\}_{i=1}^N$ of the corner $\Si$, where each $U_i$ is sufficiently 
small so that the local existence result Theorem \ref{modexist} holds for each $U_i$. Also, choose an open set $U_0$ in the interior $M\setminus \cC$ 
such that $(\cup_{i=1}^N U_i)\cup U_0$ also covers a tubular neighborhood of the initial surface, i.e. $S\times [0,t_0]$ for some time $t_0>0$.

Let $\{\w U_i\}_{i=0}^N$ be a thickening of the cover $\{U_i\}_{i=0}^N$, so that $U_i \subset \w U_i$, as described in \S 2.2. Let 
$\rho_i$ be a partition of unity subordinate to the cover $\{\w U_i\}_{i=0}^N$, so that ${\rm supp}\rho_i\subset \w U_i$, $\rho_i = 1$ on 
$U_i$ and  $\sum_i \rho_i=1$. Let 
$$\tau' =\big(F, (\g', \k', \nu', V_S'), (\ring{\g_{\cC}'}, \eta',V_{\cC}') \big),$$
be arbitrary $C^{\infty}$ data in $T(\cT^H)$ on $M$. Then the data $\rho_i \tau'$ has compact support in the sense of \S 2.2 in $\w U_i$ 
and by Theorem \ref{modexist} there exists a solution $h_i$ in $\w U_i$ satisfying 
\be \label{hitau}
D\Phi^H_g(h_i) = \rho_i \tau'. 
\ee
(Here $\tau'$ is rescaled by the relevant powers of $\l$ as in \S 2.2, but the notation will not reflect this). Moreover, as noted at the end of \S 2.2, by the finite propagation 
speed property, the solution $h_i$ has compact support in the sense of \S 2.2 for $t \in [0,t_i]$ for some $t_i > 0$. Thus, $h_i$ extends smoothly as the 
zero solution on $M_{t_i} \setminus \w U_i$. 

   Similarly, in the interior region $U_0 \subset \w U_0$, let $h_0$ be the solution to the Cauchy problem 
$$L(h_0) = \rho_0 F \ \mbox{ in } \w U_0, \ \ (g_S, K, \nu_S, V_g)'_h = \rho_0(\g', \k', \nu', V_S') \ \mbox{ on }S_0.$$
Then again $h_0$ has compact support in $\w U_0$ for some time interval $[0, t_0]$ and as above extends to $M_{t_0}$. We relabel so that 
$t_0$ is a common time interval for all $h_i$, $i \geq 0$. 

  The sum 
\be \label{sum}
h = \sum_{i=0}^N h_i,
\ee
is thus well-defined on $M_{t_0}$ and by linearity
\be \label{htau}
D \Phi_g^H(h) = \tau'.
\ee
(For this global solution, the rescaling above is of course removed). By the local energy estimate in Theorem \ref{Eestthm}, the solution $h_i$ of \eqref{hitau} is 
unique. It follows that the global solution $h$ is the unique smooth solution \eqref{htau}, since if $h'$ is any other solution of \eqref{htau}, one may apply the 
local decomposition \eqref{sum} in the same way to $h'$. Moreover, the partition of unity property shows that the local energy estimates from \eqref{Eest} sum 
to give the global energy estimate \eqref{Eest2}. The factor of $\l^{-1}$ in \eqref{Eest2} arises from the different scaling behavior of the Sobolev norms in the domain and 
target spaces. 

  The smoothness statement follows from that in Theorem \ref{modexist}. 

\end{proof}

\section{Nash-Moser theory}
   
   The analysis in \S 4 shows that the linearized IBVP with twisted Dirichlet boundary data is locally-in-time well-posed and has tame energy 
estimates if one allows for a loss of derivatives for the initial and boundary data. This is the setting of the Nash-Moser implicit function theorem, 
which we now apply to prove Theorem \ref{mainthm}. 

\medskip 

 First define the space $Met^*(M)$ to be the space of globally hyperbolic metrics $g$ (in the sense of manifolds with timelike boundary) defined (at 
 least) on a domain of the form $[0, t^*)\times S$, where, as in \eqref{tstar} 
 $$t^*: Met(M) \to \bR,$$
 is a smooth function of $g$. The value $t^*(g)$ represents the $g$-proper time between the slices $\{0\} \times S$ and $\{t^*\}\times S$. Define the 
 target data space $(\cT^H)^*$ in the same way, so that the data on $\cC$ are restricted to $[0, t^*)$. 
  
  We then have the following general off-shell result. 

\begin{theorem} \label{offshell}
There exists a smooth choice of the time function $t^*$ for $Met^*(M)$ such that the mapping 
\be \label{offPhi}
\Phi^H: Met^*(M) \to (\cT^H)^*
\ee
$$\Phi^H(g) = \big(\Ric_g + \d^*V, (\g_S, K, \nu_S, V|_S), ([g_{\cC}], \mu_g, V|_{\cC}) \big),$$
is a smooth tame diffeomorphism between tame Frechet manifolds. 
\end{theorem}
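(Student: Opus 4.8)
The plan is to deduce the theorem from the Nash--Moser inverse function theorem \cite{Ham}, with the linearized analysis of \S3--\S5 supplying its hypotheses. First I would record the manifold structure. The set $Met(M)$ of globally hyperbolic Lorentz metrics is an open subset of the tame Frechet space of smooth symmetric $2$-tensors on $M$ (graded by the norms $\bar\cH^s$, equivalently the $C^k$ norms), hence is a tame Frechet manifold; and $\cT^H$ is a smooth tame Frechet manifold by Proposition \ref{manifold}. The map $\Phi^H$ is assembled from the nonlinear differential operator $g \mapsto \Ric_g + \d_g^* V_g$, with $V_g$ the harmonic gauge field \eqref{harm}, composed with the restriction and trace operations producing the initial data $(g_S, K, \nu_S, V|_S)$ and the boundary data $([g_\cC], \mu_g, V|_\cC)$; each ingredient is a differential operator with smooth coefficients followed by restriction to a submanifold, so $\Phi^H$ is a smooth tame map between tame Frechet manifolds.

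The heart of the matter is invertibility of the linearization together with tame bounds. By Theorem \ref{globexist}, for every $g$ and $\tau = \Phi^H(g)$ the operator $D\Phi_g^H : T_g Met(M) \to T_\tau\cT^H$ is a bijection --- surjectivity and the existence of a $C^\infty$ preimage being the solvability statement, injectivity ($\Ker D\Phi_g^H = 0$) being the energy estimate \eqref{Eest2} --- and the inverse obeys the tame estimate \eqref{Eest2} with an $s$-independent loss of derivatives and with constants locally uniform in $g$. What remains, and is the step I expect to demand the most care, is to upgrade this to the assertion that $(g,\tau') \mapsto (D\Phi_g^H)^{-1}\tau'$ defines a \emph{smooth tame} map on the pullback bundle $(\Phi^H)^* T\cT^H \to Met(M)$: this follows because the coefficients of the linear IBVP of \S3--\S5 depend tamely on $g$, the solution operator constructed in Theorem \ref{modexist} and patched in Theorem \ref{globexist} depends smoothly on $g$, and the fixed loss of derivatives means that differentiating in $g$ produces only lower-order source terms controlled by the same estimates.

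Granting these hypotheses, the Nash--Moser theorem applies at every $g$ and produces open neighbourhoods of $g$ and of $\Phi^H(g)$ between which $\Phi^H$ is a diffeomorphism with smooth tame inverse; thus $\Phi^H$ is an open map and a local smooth tame diffeomorphism. Injectivity on all of $Met(M)$ then follows from uniqueness for the quasilinear hyperbolic IBVP $\Ric_g + \d_g^* V_g = Q$ with the prescribed boundary data: by Remark \ref{gaugerem}(ii) the target data determine $g$ and $\p_t g$ along $S$, so two metrics with the same image have identical Cauchy data on $S$ and identical boundary data on $\cC$, hence coincide --- consistently with $\Ker D\Phi_g^H = 0$. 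Surjectivity onto $\cT^H$ is the local-in-time existence statement: running the same Nash--Moser scheme on a short slab $M_{t_0}$, $t_0 > 0$ small --- where the linear theory of \S5 and the tame estimates above remain valid --- yields, after fixing a reference metric realizing the prescribed Cauchy data as in \cite{I,II}, a metric $g$ with $\Phi^H(g) = \tau$; this is where the local-in-time restriction of Theorem \ref{mainthm} enters. Being a continuous bijection which is a local diffeomorphism with smooth tame local inverse, $\Phi^H$ is a smooth tame diffeomorphism, as claimed.
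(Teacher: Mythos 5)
The local part of your argument matches the paper: the manifold structure of $\cT^H$ (Proposition \ref{manifold}) and of $Met(M)$, the smooth tameness of $\Phi^H$, and the tame isomorphism property of $D\Phi^H_g$ supplied by Theorem \ref{globexist}, all feeding into the Nash--Moser theorem to conclude that $\Phi^H$ is everywhere a local smooth tame diffeomorphism (your remark that the inverses $(D\Phi^H_g)^{-1}$ must depend smoothly and tamely on $g$ is a correct reading of the Nash--Moser hypotheses and is carried by Theorems \ref{modexist} and \ref{globexist}). The genuine gap is in how you globalize. For injectivity you appeal to ``uniqueness for the quasilinear hyperbolic IBVP $\Ric_g+\d^*V_g=Q$ with the prescribed boundary data,'' but no such nonlinear uniqueness theorem is available here: the boundary data are $([g_{\cC}],\mu_g,V|_{\cC})$, not full Dirichlet data, and uniqueness of the nonlinear IBVP with precisely these conditions is the injectivity assertion of the theorem itself, so the step is circular; linearized uniqueness ($\Ker D\Phi^H_g=0$) does not by itself give nonlinear uniqueness. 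The paper closes this differently: if $\Phi^H(g_1)=\Phi^H(g_2)$ but $g_1\neq g_2$ near some corner point, one rescales as in \S 2.2 so that both rescaled metrics are, by \eqref{eps}, $C^{\infty}$ $\e$-close to the Minkowski corner metric \eqref{Mincor} and hence to each other, and then the \emph{local} injectivity furnished by Nash--Moser forces them to coincide on a smaller neighborhood, a contradiction.

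Your surjectivity step has the same defect. Nash--Moser yields surjectivity only onto a neighborhood of $\Phi^H(g_0)$ in the strong graded norms, and restricting to a short slab $M_{t_0}$ does not by itself place an arbitrary compatible target $\tau$ in such a neighborhood of the image of a reference metric realizing the Cauchy data: high-order norms of the difference on $[0,t_0]\times\Si$ do not become small as $t_0\to 0$ unless the data agree to correspondingly high order at $S\cup\Si$, which you neither arrange nor cite. The paper's mechanism is again localization and rescaling: at a corner point the rescaled target $\w\tau$ is $C^{\infty}$-close to $\Phi^H(\w g_{\a_0})$, so Nash--Moser produces a local solution; these local corner solutions are then shown to agree on overlaps (using the uniqueness just established) and are combined with the solution of the Cauchy problem in the interior to give a short-time solution on all of $M$. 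If you want to keep your route, you would need to construct a reference metric matching $\tau$ to sufficiently high (in principle infinite) order along $S$ and the corner, which is exactly the extra work the rescaling argument is designed to avoid.
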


\begin{proof} 
By Proposition \ref{manifold}, the target space $\cT^H$ is a smooth tame Frechet manifold, as is the domain $Met(M)$; the same holds for $(\cT^H)^*$ and 
$Met^*(M)$. By construction, it is easy to see that the map $\Phi^H$ in \eqref{offPhi} is then a smooth, tame map between Frechet manifolds, cf.~\cite{Ham} 
for the basic definitions regarding tame maps and Frechet manifolds here. 

  We choose the time function $t^*$ (for instance) as follows. Let $t^*(g) = \l(g)$, where $\l$ is chosen smoothly so that the maximum of the $\e = \e(\l)$ factors in 
\eqref{ee2.1}-\eqref{ee5} satisfies $\e C \in [\frac{1}{4}, \frac{1}{2}]$. With this choice, by Theorem \ref{globexist}, the derivative $D\Phi_g^H$ is then a tame 
isomorphism at any $g \in Met^*(M)$ and by the energy estimate \eqref{Eest2} the inverse $(D\Phi_g^H)^{-1}$ at $\tau(g) \in (\cT^H)^*$ exists and is a tame 
linear map of Frechet spaces. The Nash-Moser inverse function theorem, cf.~\cite{Ham}, \cite{Z2}, then implies that $\Phi^H$ is everywhere a smooth 
local diffeomorphism. 

  Next we claim that $\Phi^H$ in \eqref{offPhi} is not only locally one-to-one but in fact globally one-to-one. To see this, suppose $g_1, g_2$ are two metrics 
with the same gauged target data, i.e.~$\Phi^H(g_1) = \Phi^H(g_2)$. If $g_1 \neq g_2$, then there exists an arbitrarily small neighborhood 
$U$ of some $p \in \Si$ such that $g_1 \neq g_2$ in $U$. One may then perform a localization or rescaling of each $g_i$ as in \S 2.2 so that the almost 
constant coefficient metrics $\w g_i$ satisfy $\w g_1 \neq \w g_2$ in $U$. However, by \eqref{eps}, the metrics $\w g_i$ satisfy 
$$|| \w g_2 - \w g_1||_{C^{\infty}} \leq \e,$$
in $U$, for a prescribed $\e > 0$. It then follows from the local uniqueness in the Nash-Moser theorem above that 
$$\w g_2 = \w g_1$$
in some $U' \subset U$. Hence $g_2 = g_1$ in $U'$, which proves the global uniqueness. 

  To see that $\Phi^H$ is globally surjective, let $\tau \in \cT^H$ be arbitrary. Consider a small corner neighborhood $U$ of $p \in \Si$ and rescale the data 
$\tau$ to $\w \tau$. Then $Q \sim 0$, $\g_S$ is close to a flat half-space $\{x^1 \leq 0\}$ in $\bR^n$, $\k \sim 0$, $\nu_S$ is close to the value in \eqref{normals} 
and $V \sim 0$. Thus $\w \tau$ is close to the value $\w \tau_0 =  \Phi(\w g_{\a_0})$, where $\w g_{\a_0}$ is the Minkowski corner metric in \eqref{Mincor}. 
The Nash-Moser theorem above applies and gives the existence of $\w g$ in $U$ such that $\Phi(\w g) = \w \tau$. Rescaling back then gives the existence 
of $g$ such that $\Phi(g) = \tau$ in $U$. Since $p \in \Si$ is arbitrary, such local solutions exist everywhere near $\Si$ and the uniqueness above implies 
the local solutions agree on overlaps. Together with existence and uniqueness of solutions to the Cauchy problem on $S$, this gives the existence of $g$ 
on $M$ (for small time) with $\Phi(g) = \tau$. 
  
This completes the proof of Theorem \ref{offshell}. 

\end{proof}

Next we turn to the proof of Theorem \ref{mainthm}

\begin{proof}
Let 
$$\cT^H_0 = \{ \big(0, (\g, \k, \nu_S, 0), (\g_{\cC}, \mu, 0)\big) \} \subset \cT^H,$$
so $Q$ and the initial and boundary data of the gauge field in $\cT^H$ are set to zero. It is proved in \cite{I} that $\cT^H_0$ is a smooth submanifold 
of $\cT^H$. (The fact that the boundary data here is twisted Dirichlet boundary data as opposed to Dirichlet boundary data plays no role in the proof). 
Throughout the following, we work in the local-in-time context, but for simplicity drop the $^*$ from the notation in domain and target spaces. 

  Observe that 
$$(\Phi^H)^{-1}(\cT^H_0) = \bE^H \simeq \cE_*.$$
Namely, $\Phi^{-1}(\cT^H_0)$ is the space of Lorentz metrics $g$ on $M$ with $Q  = 0$ on $M$ and $V = 0$ on $S \cup \cC$. Since then 
$\Ric_g + \d^*V = 0$, it follows from Remark \ref{gaugerem} that $V = 0$ on $M$. Thus the metric $g$ is in harmonic gauge and $\Ric_g = 0$, 
with given unit timelike normal $\nu_S$ at $S$. 
It follows then from Theorem \ref{offshell} that the map 
$$\Phi^H: \bE^H \to \cT^H_0,$$
is a smooth, tame diffeomorphism between Frechet manifolds, with smooth, tame inverse $(\Phi^H)^{-1}$. This completes the proof of Theorem \ref{mainthm}. 

\end{proof} 

Next we give the proof of Corollary \ref{Eman}. 

\begin{proof}
We have shown in the proof of Theorem \ref{mainthm} above that $\bE^H$ is a smooth, tame Frechet manifold. The group $\Diff_0(M)/\Diff_1(M)$ is a 
smooth tame Frechet Lie group (cf.~\cite[I.4.6]{Ham}) which acts smoothly and freely on $\bE^H$. It then  follows from standard theory, 
cf.~\cite[Cor.III.2.5.2]{Ham} that the quotient $\bE^H / (\Diff_0(M)/\Diff_1(M))$ is a smooth, tame Frechet manifold. As noted in \S 1, 
$$\cE \simeq \bE^H / (\Diff_0(M)/\Diff_1(M)),$$
and hence $\cE$ is a smooth, tame Frechet manifold. 

\end{proof}

\begin{remark} \label{findiff} 
{\rm All of the discussion above has been phrased in the $C^{\infty}$ context. However, all the results may be extended to Sobolev spaces $H^s$ of finite 
differentiability. Thus, as discussed in \S 2, the maps $\Phi$ and $\Phi^H$ are well-defined smooth maps 
between a scale of Banach manifolds, 
$$\Phi^H: Met^s(M) \to (\cT^H)^s,$$
indexed by $s$. The energy estimates in \eqref{Eest2} hold on the $H^s$ completions of the domain and target spaces for suitable values of $s$, giving the 
existence of a tame inverse for the derivative $D\Phi^H$. One may then apply the finite derivative version of the Nash-Moser inverse 
function theorem due to Zehnder \cite{Z1}, or \cite[Thm.6.3.3]{Z2}; in the notation used there, the loss of derivative is $\g = 3$. 
However, we will not carry out here the details that the hypotheses of \cite[Thm.6.3.3]{Z2} do hold. 

}
\end{remark}
 
 \begin{remark}
 {\rm 
  As in \cite{I}, the results in this work hold for any value of the cosmological constant $\Lambda \in \bR$. The results also hold for the Einstein equations 
coupled to Kaluza-Klein matter fields, which derive from the vacuum Einstein equations in higher dimensions by symmetry reduction. The boundary data 
of the KK fields are then induced from the higher dimensional vacuum boundary data in $\cB$. 
}
\end{remark}

\bibliographystyle{plain}

\begin{thebibliography}{WWW}

\footnotesize 

\bibitem{I} Z. An and M. Anderson, {\it Well-posed geometric boundary data in General Relativity, I: Dirichlet boundary data},
\href{https://arXiv.org/abs/2505.07128}{arXiv:2505.07128}. 

\bibitem{AA2} Z. An and M. Anderson, {\it The initial boundary value problem and quasi-local Hamiltonians in general relativity}, 
Class \& Quant. Grav., {\bf 38},  (2021), no. 15. \href{https://arXiv.org/abs/2103.15673}{arXiv:2103.15673}. 

\bibitem{AGM} D. Anninos, D. Galante and C. Maneerat, {\it Gravitational observatories in $AdS_4$}, 
JHEP, {\bf 2025}, No. 234, \href{https://arXiv.org/abs/2412.16305}{arXiv:2412.16305}.

\bibitem{BS} S. Benzoni-Gavage and D. Serre, {\it Multi-dimensional Hyperbolic Partial Differential Equations}, Oxford 
Mathematical Monographs, Clarendon Press, Oxford, (2007).  
 
 \bibitem{UG} R.Carballo-Rubio, L. Garay and G. García-Moreno, {\it Unimodular Gravity vs General Relativity: A status report}, 
Class. \& Quant. Grav., {\bf 39}, (2022), 243001, \href{https://arXiv.org/abs/2207.08499}{arXiv:2207.08499}.
 
\bibitem{FN} H. Friedrich and G. Nagy, {\it The initial boundary value problem for Einstein's vacuum field equations}, 
Comm. Math. Phys., {\bf 201}, (1999), 619-655. 

 \bibitem{Ham} R. Hamilton, {\it The inverse function theorem of Nash and Moser}, Bull. Amer. Math. Soc., {\bf 7:1}, (1982), 65-222. 
 
\bibitem{KRSW} H.-O. Kreiss, O. Reula, O. Sarbach and J. Winicour, {\it Well-posed initial boundary value problem for the 
harmonic Einstein equations using energy estimates}, Class. Quantum Grav., {\bf 24}, (2007), 5973-5984.

\bibitem{LRSW} X. Liu, H. Reall, J. Santos and T. Wiseman, {\it Ill-posedness of the Cauchy problem for linearized gravity in a cavity with 
conformal boundary conditions}, Class. \& Quant. Gravity, {\bf 42}, (2025), 235003, \href{https://arXiv.org/abs/2505.20410}{arXiv:2505.20410}.

\bibitem{MT} R. Melrose and M. Taylor, {\it Boundary problems for wave equations with grazing and gliding rays}, 
manuscript, available at https://mtaylor.web.unc.edu/notes/wave-propagation/ .

\bibitem{Mon} V. Moncrief, {\it Spacetime symmetries and linearization stability of the Einstein equations, I}, 
Jour. Math. Phys., {\bf 16}, (1975), 493-498. 

\bibitem{OR} G. Odak and S. Ribisi, {\it Weyl-transverse gravity with boundaries}, 
\href{https://arXiv.org/abs/2601.15976}{arXiv:2601.15976}.

\bibitem{Sa} R. Sakamoto, {\it Hyperbolic Boundary Value Problems}, Cambridge Univ. Press, Cambridge, (1982). 

\bibitem{Tat} D. Tataru, {\it On the regularity of boundary traces for the wave equation}, Ann. Scuola. Norm. Pisa, {\bf 26}, 
(1998), 185-206. 

\bibitem{Z1} E. Zehnder, {\it Generalized implicit function theorems with applications to some small divisor problems, I}, 
Comm. Pure Appl. Math., {\bf 28}, (1975), 91-141. 

\bibitem{Z2} E. Zehnder, {\it Generalized implicit function theorems}, in: {\it Topics in Nonlinear Functional Analysis}, Lectures 
by L. Nirenberg, 1973-74, Courant Inst. Math. Sci., New York University, 1974. 
 




\end{thebibliography}

\end{document}